\newtheorem{theorem}{Theorem}
\newtheorem{lemma}{Lemma}
\newtheorem{definition}{Definition}
\newtheorem{remark}{Remark}
\title{Information-Theoretic Limits of Integrated Sensing and Communication with Finite Learning Capacity}
\def\blfootnote{\xdef\@thefnmark{}\@footnotetext}
\begin{document}
		\author{Farshad~Rostami~Ghadi,~\IEEEmembership{Member},~\textit{IEEE},~F.~Javier~L\'opez-Mart\'inez,~\IEEEmembership{Senior~Member},~\textit{IEEE},\\~Kai-Kit~Wong,~\IEEEmembership{Fellow},~\textit{IEEE},~and Christos~Masouros,~\IEEEmembership{Fellow},~\textit{IEEE}%,~and~Lajos~Hanzo,~\IEEEmembership{Life Fellow},~\textit{IEEE}
		}
		
	\maketitle
	
	\blfootnote{The work of F. Rostami Ghadi is supported by the European Union's Horizon 2022 Research and Innovation Programme under Marie Skłodowska-Curie Grant No. 101107993.}
	\blfootnote{The work of F. J. L\'opez-Mart\'inez is supported by grant PID2023-149975OB-I00 (COSTUME) funded by MICIU/AEI/10.13039/501100011033, and by ERDF/EU.}
	\blfootnote{The work of  K. K. Wong is supported by the Engineering and Physical Sciences Research Council (EPSRC) under Grant EP/W026813/1.}
	
%	\blfootnote{The work of L. Hanzo is supported by the Engineering and Physical Sciences Research Council (EPSRC) projects Platform for Driving Ultimate Connectivity (TITAN) under Grant EP/X04047X/1 and Grant EP/Y037243/1.}
	
	\blfootnote{\noindent F. Rostami Ghadi and F. J. L\'opez-Mart\'inez are with the Department of Signal Theory, Networking and Communications, Research Centre for Information and Communication Technologies (CITIC-UGR), University of Granada, 18071, Granada, Spain. (e-mail: $\rm\{ f.rostami,fjlm\}@ugr.es$).}

%\blfootnote{\noindent F. Rostami Ghadi is with the Department of Electronic and Electrical Engineering, University College London, WC1E 7JE London, U.K. (e-mail: $\rm f.rostamighadi@ucl.ac.uk$).}

%	\blfootnote{\noindent F. J. L\'opez-Mart\'inez is with the Department of Signal Theory, Networking and Communications, Research Centre for Information and Communication Technologies (CITIC-UGR), University of Granada, 18071, Granada, Spain, and also with the Communications and Signal Processing Lab, Telecommunication Research Institute (TELMA), Universidad de M\'alaga, M\'alaga, 29010, Spain. (e-mail: $\rm fjlm@ugr.es$).}
	
		\blfootnote{\noindent K. K. Wong is with the Department of Electronic and Electrical Engineering, University College London, WC1E 7JE London, U.K., and also with Yonsei Frontier Lab, Yonsei University,  Seoul, Korea. (e-mail: $\rm kai\text{-}kit.wong@ucl.ac.uk$).}

	\blfootnote{\noindent  C. Masouros
		is with the Department of Electronic and Electrical Engineering, University College London, London, U.K. (e-mail: $\rm c.masouros@ucl.ac.uk$).}
	
%		\blfootnote{\noindent L. Hanzo is with the School of Electronics and Computer Science, University of Southampton, Southampton, U.K. (e-mail: $\rm lh@ecs.soton.ac.uk$).}
	
	\blfootnote{Corresponding author: Farshad Rostami Ghadi.}
	\begin{abstract}
This paper develops a unified information-theoretic framework for artificial-intelligence (AI)-aided integrated sensing and communication (ISAC), where a learning component with limited representational capacity is embedded within the transceiver loop. The study introduces the concept of an AI capacity budget to quantify how the finite ability of a learning model constrains joint communication and sensing performance. Under this framework, the paper derives both converse (upper) and achievability (lower) bounds that define the achievable rate-sensing region. For Gaussian channels, the effect of limited learning capacity is shown to behave as an equivalent additive noise, allowing simple analytical expressions for the resulting communication rate and sensing distortion. The theory is then extended to Rayleigh and Rician fading as well as to multiple-input multiple-output (MIMO) systems through new matrix inequalities and a constructive mapping between AI capacity and effective noise covariance. Resource allocation between sensing and communication is optimized under this learning constraint, yielding closed-form conditions in the Gaussian case. A general learning-information trade-off law is also established, linking the representational power of the learning module to the achievable performance frontier. Finally, a practical variational training procedure is proposed to enforce the capacity constraint and to guide empirical evaluation. The derived scaling laws provide quantitative insight for co-designing model size, waveform, and hardware in next-generation ISAC systems.

	\end{abstract}
	
	\begin{IEEEkeywords}
Integrated sensing and communication, information theory, information bottleneck, deep learning, 6G.
	\end{IEEEkeywords}
	\section{Introduction}
%	Integrated Sensing and Communication (ISAC) has emerged as a cornerstone technology for sixth-generation (6G) wireless networks, enabling the same waveform, spectrum, and hardware resources to be jointly used for both data communication and environmental sensing. By merging these traditionally separate functionalities, ISAC promises significant gains in spectral efficiency, latency reduction, and situational awareness.  
%	
%	Classical information-theoretic studies of ISAC focus on characterizing rate–distortion or capacity–sensing trade-offs under ideal algorithmic receivers that are assumed to process signals optimally. However, practical ISAC systems increasingly depend on \emph{learned components}—such as neural encoders, decoders, beamformers, and feature extractors—whose representational and computational capacities are inherently finite. These limitations, arising from quantization, compression, or model size, directly influence how much information can be preserved and exploited within the transceiver loop. Despite the proliferation of AI-aided ISAC prototypes and demonstrations, a rigorous theoretical framework that explicitly accounts for the learning capacity of such modules has not yet been established.

Integrated sensing and communication (ISAC) has become a key technology for sixth-generation (6G) wireless networks, where the same spectrum, waveform, and hardware resources are jointly used for data transmission and environmental perception \cite{zhang2021over}. By combining communication and sensing within a common physical layer, ISAC systems can achieve higher spectral and energy efficiency, lower latency, and improved situational awareness. This integration also enables new applications such as radar-assisted communication, vehicle perception, and joint localization and connectivity in intelligent networks \cite{liu2023seventy}.

Classical information-theoretic studies of ISAC have focused on fundamental trade-offs between communication rate and sensing accuracy. %\cite{xiong2023fund}. 
 Recent works have established formal ISAC information-theoretic models, including the capacity-distortion trade-off for memoryless ISAC channels \cite{AhmadipourJoint2024}, collaborative ISAC for multi-terminal systems \cite{AhmadipourCollaborative2023}, finite-blocklength ISAC bounds \cite{NikbakhtFinite2024}, and joint communication-state sensing under logarithmic-loss distortion \cite{JoudehCaire2024}. These works provide accurate rate-distortion characterizations under ideal transceiver assumptions. Such models provide important theoretical insights, but they assume optimal signal processing and unlimited representational capability at the receiver.
%These formulations often rely on idealized transceiver models that assume optimal signal processing and inference. While such models provide important theoretical insights, they do not capture the limitations of modern machine learning (ML)-based transceivers that are now being adopted in practice \cite{demirhan2023integrate}. 
%In practical ISAC architectures, many transceiver components, including modulators, channel estimators, beamformers, and target detectors, are implemented using learned models such as neural networks or data-driven feature extractors \cite{liu2025integrated, luong2025advanced}. 
In emerging ISAC architectures, several transceiver components, including channel estimation, beamforming, and target classification modules, are increasingly implemented using learning-based models, as demonstrated by recent experimental prototypes and automotive sensing systems. These models have finite representational and computational capacity determined by their size, quantization precision, and available training data. Such constraints create a learning bottleneck that limits how much information can be preserved and exploited throughout the transceiver chain \cite{Alemi}. The resulting system behavior differs from the assumptions of perfect information preservation that underlie classical ISAC theory, and the overall performance becomes jointly determined by both the physical channel and the learning capacity of the transceiver modules.

Although some artificial intelligence (AI)-aided ISAC prototypes and experimental platforms have been demonstrated \cite{vaezi2025ai, ald20256g}, these studies are primarily algorithmic and do not offer an analytical connection between learning capacity and fundamental ISAC limits. %there is still no accurate theoretical framework that explicitly incorporates the finite learning capacity of neural modules into the joint sensing and communication analysis.
 Establishing such a framework is essential for understanding how learning constraints influence achievable rates and sensing accuracy, and for guiding the design of future 6G systems that integrate AI as part of the physical layer.
\subsection{State-of-the-Art}
Information-theoretic foundations of ISAC have been extensively investigated over the past few years.  In particular, \cite{AhmadipourJoint2024} characterizes the exact capacity-distortion region for single-receiver ISAC, while \cite{AhmadipourCollaborative2023} extends these results to multiple access channel and device-to-device settings with collaborative state information. Furthermore, finite-blocklength behavior has been analyzed in \cite{NikbakhtFinite2024}, and log-loss sensing mutual information (MI) was developed in \cite{JoudehCaire2024}. These works establish the fundamental information-theoretic structure of ISAC, however, all assume ideal transceivers without representational constraints.

%\cite{liu2022survey} provided a survey on the fundamental limits of ISAC, characterizing rate-distortion and capacity-sensing trade-offs under ideal transceiver assumptions. \cite{xiong2023fund} analyzed the fundamental rate-Cram\'er-Rao bound (CRB) region of Gaussian ISAC channels, while  \cite{liu2022cro} optimized joint radar-communication beamforming using a CRB-based formulation under communication constraints. 

The extension of ISAC to multi-antenna and multi-user scenarios has also been a major research focus. The rate-Cram\'er-Rao bound (CRB) trade-off for multiple-input multiple-output (MIMO) ISAC and the optimal transmit covariance structures that balance communication throughput and sensing accuracy were investigated in \cite{hua2023mimo}. Also,\cite{an2023fund} examined the trade-off between detection probability and achievable rate in dual-functional radar-communication systems, while \cite{ren2024fund} extended this analysis to multi-target and multicast ISAC networks and characterized the achievable CRB-rate region for multi-antenna configurations. Although these works enhance the understanding of ISAC design across practical deployment scenarios, they generally assume ideal signal processing blocks and do not incorporate learning constraints within the transceiver.
 
 %These studies have built the theoretical foundation of ISAC performance limits, but they generally assume ideal receiver processing and do not account for algorithmic or learning-related limitations.
 
 %While these works enhance understanding of ISAC design across practical deployment scenarios, they continue to treat signal processing modules as ideal and do not explicitly model the internal capacity constraints associated with learning-based transceiver components.
	
Beyond theoretical formulations, optimization-driven approaches have been developed for dual-functional radar-communication (DFRC) systems. For instance, \cite{dong2023joint} proposed a secure joint beamforming and power allocation scheme for MIMO DFRC systems to satisfy both sensing and secrecy constraints. Furthermore, \cite{liu2020radar} introduced a radar-assisted predictive beamforming framework that exploits sensing feedback to improve communication reliability in vehicular networks. These methods focus on practical algorithmic strategies but remain deterministic and do not connect model complexity or neural-network capacity to fundamental ISAC limits.

%Although these methods address practical algorithmic designs, they remain deterministic and do not connect model complexity or neural network capacity to fundamental information-theoretic limits.

AI-enabled and learning-based ISAC systems have recently gained significant attention: \cite{zhang2025int} presented comprehensive surveys on intelligent ISAC architectures that incorporate deep learning (DL) for waveform design, feature extraction, and target recognition; \cite{ald20256g} provided an extensive review of ISAC technologies for 6G, discussing enabling methods, standardization progress, and the growing influence of AI and machine learning (ML) in both sensing-centric and communication-centric systems. Also, \cite{liu2024ai} proposed an AI-driven ISAC framework based on a federated fog network architecture and introduced learning-based interference management and mobility-aware control mechanisms. Recently, \cite{vaezi2025ai} further explored AI-empowered ISAC, demonstrating the use of DL for unified waveform and beamforming design to jointly optimize sensing and communication performance. However, while these works demonstrate the practical advantages of AI in ISAC, they do not establish information-theoretic performance limits under finite learning capacity. 

A related line of research examines representation-constrained learning and task-oriented communication: \cite{Shlezinger2021} proposed deep task-based quantization to optimize front-end quantizers for ML-driven receivers, while \cite{xi2021bit} studied bit-limited MIMO radar systems under task-based quantization principles. In a seminal paper,\cite{Tishby2000} introduced the information bottleneck framework, and \cite{Alemi} extended it to deep neural networks. These studies highlight the role of information constraints in learning, but none connect a finite MI representation budget to joint ISAC performance, nor do they provide an analytical capacity-distortion region under learning constraints.

	\subsection{Motivation and Contributions}
	Despite extensive progress in ISAC for multi-antenna design, DFRC optimization, and AI-enabled implementations, existing works either assume idealized receiver processing or focus on algorithmic designs without quantifying the effect of limited learning capacity.
	In contrast, modern AI-driven transceivers rely on finite models that can only capture a subset of the relevant signal statistics. This gap between ideal theoretical assumptions and practical learning-based implementations motivates a new perspective that incorporates learning constraints into information-theoretic modeling of ISAC systems. By incorporating this perspective, we can better understand how constraints on the learning process translate into measurable losses in sensing accuracy and communication rate, and how to co-design model capacity alongside waveform and power resources.
	
	Motivated by filling out the aforesaid gap, this work introduces an explicit \emph{AI capacity constraint} into the ISAC framework and develops the first unified information-theoretic characterization of its impact. The key novelty lies in modeling the learning component as a \emph{stochastic bottleneck} that is limited by a MI budget, representing the finite ability of the learning model to capture and transmit relevant features. This abstraction provides a clean analytical bridge between DL principles and classical communication theory. The main contributions are summarized as follows:
	\begin{itemize}
		\item \textbf{AI-ISAC capacity region:} We formalize the achievable rate-sensing region for ISAC systems that include a finite-capacity learning module. The model introduces an information bottleneck between transmitted and processed signals, defining a joint region of feasible communication rates and sensing distortions under a specified learning capacity.
		
		\item \textbf{Tight bounds and scaling laws:} We derive both converse (upper) and achievability (lower) bounds for the proposed framework. In the Gaussian case, limited learning capacity behaves as an equivalent additive noise, leading to an analytically tractable relation between model capacity and system performance. The resulting performance loss decays exponentially with increasing learning capacity, following a clear and interpretable scaling law.
		
		\item \textbf{Extension to fading and multiple-antenna channels:} The framework is generalized to Rayleigh and Rician fading environments through integral forms and tight bounds, and further extended to MIMO systems via a new matrix-based noise-allocation lemma that links learning capacity to an effective covariance structure.
		
		\item \textbf{Resource optimization under learning constraints:} We formulate and solve joint power and time allocation problems for sensing and communication tasks when the transceiver includes a finite-capacity learning component. For Gaussian channels, the Karush-Kuhn-Tucker (KKT) conditions yield closed-form solutions that highlight the coupling between physical resources and learning capacity.
		
		\item \textbf{Practical realization and training procedure:} A practical variational training algorithm is proposed to enforce the capacity constraint during network optimization using a differentiable MI penalty. This enables empirical validation of the theoretical framework and provides a bridge between information theory and modern DL practice.
	\end{itemize}
	
%	\subsection{Novelty}
%	The novelty of this work lies in treating \emph{AI model capacity as a fundamental physical resource} within ISAC. Unlike prior studies that either assume ideal receivers or focus purely on algorithmic design, this framework provides a quantitative theory connecting the limits of learning, communication, and sensing in a unified form. By deriving explicit capacity regions, scaling laws, and optimization results, the proposed model offers both theoretical insights and practical guidance for designing future AI-driven ISAC systems.

%	\begin{remark}
%		Throughout, rates are in bits per (complex) channel use; thus $I=\log_2(1+\mathrm{SNR})$ for complex AWGN. This fixes the dimensional constant often source of $1/2$ factors in real-valued channels.
%	\end{remark}
	
	\section{System Model and Preliminaries}
	% ================================
	
	In this section, we introduce the mathematical model for the considered AI-aided ISAC system and outline the basic definitions and performance metrics used throughout the paper. We begin with the baseband signal model, followed by the formulation of the learning module and its associated capacity constraint, and finally define the measures used to evaluate joint sensing and communication performance.
	
	\subsection{ISAC Baseband Model}
	%We consider a transceiver that performs both communication and sensing using a common waveform and shared resources. 
	As shown in Fig. \ref{fig:model1}, we consider a standard ISAC architecture consisting of a single transmitter equipped with a unified signaling module, a dedicated communication receiver, and a sensing receiver that observes target-dependent echoes. The transmitter emits a joint waveform that simultaneously conveys information to the communication receiver and probes the environment for sensing.
	The transmitted signal at time index $t$ is given by
	\begin{equation}
		x_t = s_t + c_t, \qquad \mathbb{E}[|x_t|^2] \le P,
	\end{equation}
	where $s_t$ and $c_t$ denote the sensing and communication components, respectively, and $P$ represents the total transmit power constraint. 
	
	At the receiver side, two types of observations are obtained; one dedicated to data communication and the other to environment sensing, i.e., 
	\begin{align}
		y_{c,t} &= h_c x_t + n_{c,t}, \qquad n_{c,t} \sim \mathcal{CN}(0,N_c), \label{eq:cobs}\\
		y_{s,t} &= h_s(\boldsymbol{\theta}) x_t + n_{s,t}, \qquad n_{s,t} \sim \mathcal{CN}(0,N_s), \label{eq:sobs}
	\end{align}
	where $h_c$ and $h_s(\boldsymbol{\theta})$ denote the complex channel coefficients for communication and sensing, respectively. Besides, $N_c$ and $N_s$ are the noise variances at the communication and sensing receivers, respectively. The sensing channel depends on an unknown parameter vector $\boldsymbol{\theta}$ that represents environment-related quantities such as delay, Doppler, or angle of arrival (AoA). Thus, the receiver aims to decode the transmitted communication message while simultaneously estimating $\boldsymbol{\theta}$ from the reflected/received signal.
		\begin{figure}[t]
		\centering
		\includegraphics[width=.9\columnwidth]{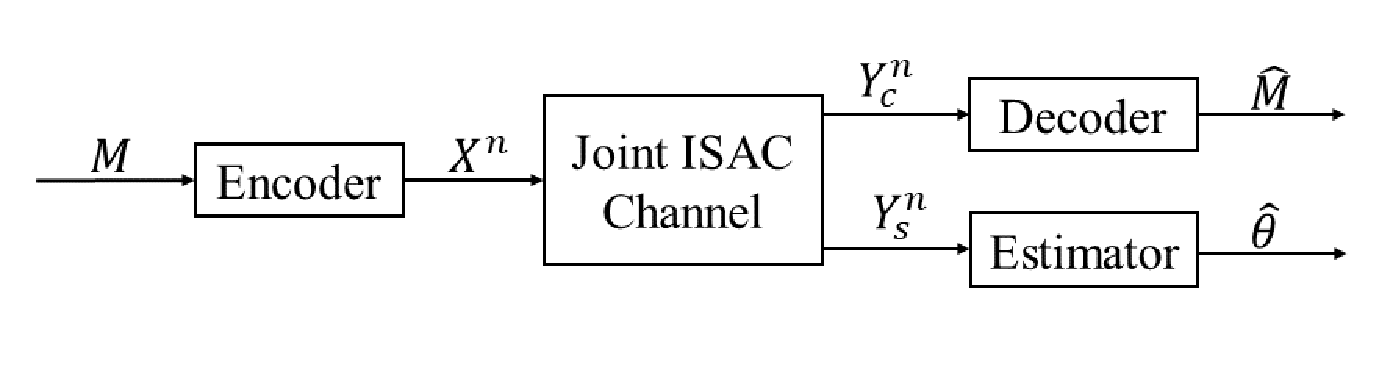}
		\caption{Physical baseband ISAC system model.}
		\label{fig:model1}
	\end{figure}
	\subsection{Learning Module and Capacity Budget}
	In conventional ISAC systems, both communication decoding and target estimation typically assume algorithms with unlimited representational ability. In contrast, the proposed \emph{learning-constrained} framework (see Fig. \ref{fig:model2}) explicitly models the finite capability of the embedded AI module. 
	
	The transceiver employs a learnable representation module $(\phi_e, \phi_d)$ that induces a latent variable $Z$ associated with the transmitted signal $X$.
	%The communication receiver is equipped with a learnable encoder-decoder pair $(\phi_e, \phi_d)$ that produces a latent representation $Z$ used for both message decoding and sensing inference. 
	 The sensing receiver relies on the same latent $Z$ together with its own physical observation $Y_s$. The encoder maps the input data sequence $X$ to a lower-dimensional latent representation as follow
	\begin{equation}
		Z = f_{\phi_e}(X),
	\end{equation}
	which serves as a compressed %|or abstracted 
	form of the transmitted signal. The decoder $g_{\phi_d}$ subsequently operates on the available observations together with this latent variable to perform message decoding or parameter estimation tasks. The variable $Z$ represents an internal learned representation associated with the transmitted signal $X$, capturing the finite modeling capacity of the AI-assisted transceiver, which is not a physical channel observation. We assume the latent representation $Z^n$ associated with 
	$X^n$ is available to the receiver as side information, e.g., via shared model state or embedded signaling.
	
	To capture the limited expressive power of the learning module, we impose an \emph{information capacity constraint} on the latent representation as
	\begin{equation}
		I(X;Z) \le C_{\mathrm{AI}}, \label{eq:budget}
	\end{equation}
	where $I(X;Z)$ denotes the MI between the input and the learned latent, and $C_{\mathrm{AI}}$ quantifies the maximum representational capacity of the AI model. In fact, the constraint in \eqref{eq:budget} is adapted from the information bottleneck principle, where $C_\mathrm{AI}$ quantifies the maximum number of relevant bits the learning module can extract from the transmitted signal. This capacity reflects representational limits arising from model size, quantization precision, and training complexity. In contrast to classical compression metrics,  $C_\mathrm{AI}$ characterizes the intrinsic expressiveness of the learned representation rather than a communication link budget. %Similar MI constraints have been used to model finite-representation learning systems in information theory and ML. 
	Similar information-constrained representations have been extensively used in learning theory, most notably through the information bottleneck framework \cite{Tishby2000} and its variational formulation for deep networks \cite{Alemi}.
	 %The resulting dependency structure forms the Markov chain
%	$
		%Z \leftarrow X \rightarrow (Y_c, Y_s)
%$, 
%	which formalizes the notion that all downstream inferences rely solely on the compressed representation $Z$ rather than the full input. 
The finite-capacity representation 
$Z$ limits how much structure of the transmitted signal can be exploited by downstream communication and sensing tasks.

Since the latent representation $Z$ is generated from the transmitted signal $X$ and does not influence the physical channel, the dependency structure follows the fork
$Z \leftarrow X \rightarrow (Y_c, Y_s)$, i.e., $Z - X - (Y_c, Y_s)$. 
Here, $Y_c$ and $Y_s$ denote the received communication and sensing signals, respectively, obtained through the corresponding communication and sensing channels.
			\begin{figure}[t]
		\centering
		\includegraphics[width=.9\columnwidth]{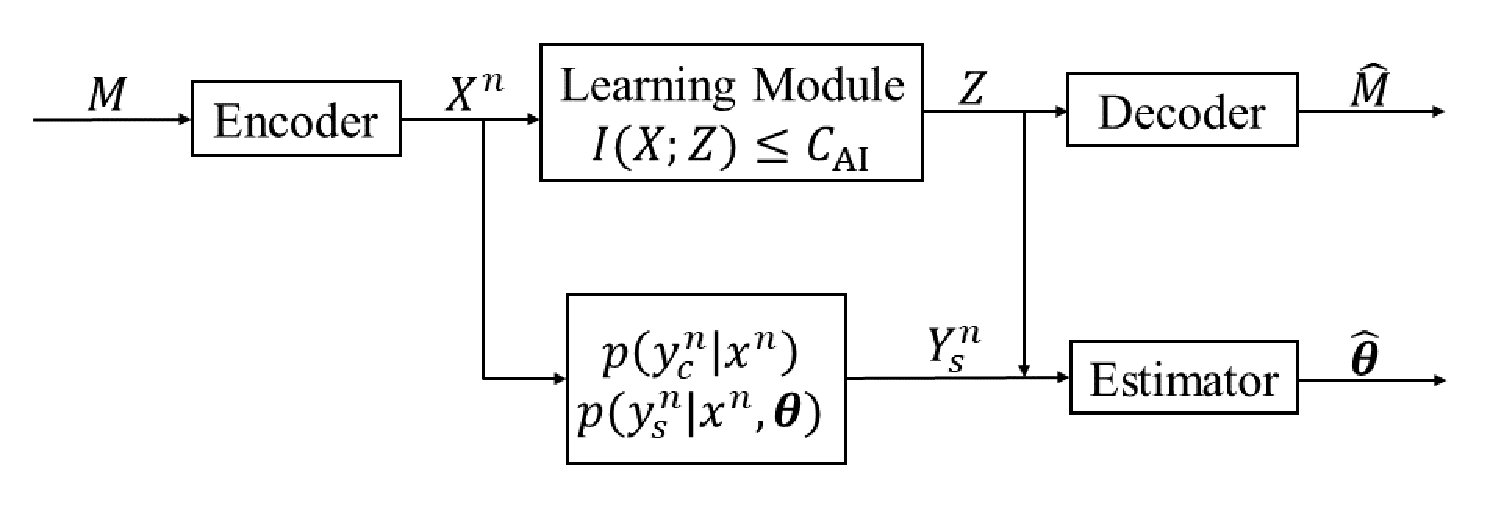}
		\caption{Information-theoretic abstraction of the learning-constrained ISAC system model.}
		\label{fig:model2}
	\end{figure}
	\subsection{Performance Metrics}
	The joint performance of communication and sensing is evaluated using standard information-theoretic and estimation-theoretic criteria \cite{AhmadipourJoint2024}.
	
	\subsubsection{Communication rate} The transmitter conveys a message $M \in \{1, 2, \dots, 2^{nR}\}$ over $n$ channel uses. A reliable scheme satisfies $\Pr\{\hat{M} \neq M\} \to 0$ as $n \to \infty$, where $R$ %(in bits per channel use) 
	represents the achievable communication rate\footnote{Throughout, rates are expressed in bits per (complex) channel use; thus $I=\log_2(1+\mathrm{SNR})$ for complex low-pass equivalent signals under AWGN.} %This fixes the dimensional constant often source of $1/2$ factors in real-valued channels.}.
	
	\subsubsection{Sensing distortion} The receiver estimates the unknown environmental parameter $\boldsymbol{\theta}$ based on the sensing observations and the latent representation $(Y_s, Z)$. The accuracy of this estimate $\hat{\boldsymbol{\theta}}$ is measured by the expected sensing distortion as follow
	\begin{equation}
		D_s = \mathbb{E}\!\left[ d(\boldsymbol{\theta}, \hat{\boldsymbol{\theta}}) \right],
	\end{equation}
	where $d(\cdot,\cdot)$ is a suitable distance or error measure, such as normalized mean-squared error (NMSE). For parametric estimation problems, we also use the Bayesian Fisher information $\mathcal{I}(\boldsymbol{\theta})$ and the corresponding Cram\'er-Rao lower bound (CRLB) as analytical surrogates.
	
	Together, the pair $(R, D_s)$ characterizes the joint communication-sensing performance under a given AI capacity constraint $C_{\mathrm{AI}}$. 
	
	\section{Learning-Constrained Information-Theoretic Region}
	
	This section establishes the fundamental limits of ISAC systems when the transceiver employs a finite-capacity learning-based representation of the transmitted signal. We first define the achievable region under a learning-capacity constraint, then provide general converse and achievability results that bound this region from above and below. Finally, we present the sensing information map for Gaussian observations and state a general learning-information trade-off law linking model generalization to achievable performance.
	
	\subsection{Definition of the AI-ISAC Region}
	To formalize the performance limits of learning-constrained ISAC, we characterize the set of all achievable communication rates and sensing distortions under a given learning-capacity budget. 
	
	\begin{definition}[AI-ISAC Region]
		The achievable region associated with a learning capacity constraint $C_{\mathrm{AI}}$ is defined as
		\begin{equation}
			\mathcal{R}_{\mathrm{AI\!-\!ISAC}}(C_{\mathrm{AI}})=
			\left\{(R,D_s):
			\begin{array}{l}
				R \le I(X;Y_c|Z),\\[2pt]
				D_s \ge f\big(I(X;Y_s|Z)\big),\\[2pt]
				I(X;Z) \le C_{\mathrm{AI}}
			\end{array}
			\right\},
			\label{eq:region}
		\end{equation}
		where $f(\cdot)$ denotes the information-to-distortion mapping associated with the sensing task. This region quantifies the joint communication and sensing performance attainable when the learning module can represent at most $C_{\mathrm{AI}}$ bits of information about the transmitted signal $X$.
	\end{definition}
	
	\subsection{Converse and Achievability Bounds}
	The following theorems provide upper and lower bounds on the achievable rate-distortion pair $(R, D_s)$ within the defined region.
	
	\begin{theorem}[Converse]\label{th:converse}
		For any system satisfying the learning capacity constraint \eqref{eq:budget}, the achievable pair $(R, D_s)$ is bounded as
		\begin{align}
			R &\le I(X;Y_c,Y_s) - \Delta_R(C_{\mathrm{AI}}), \label{eq:rbub}\\
			D_s &\ge f\!\big(I(X;Y_c,Y_s) - \Delta_D(C_{\mathrm{AI}})\big), \label{eq:dsbub}
		\end{align}
		where $\Delta_R(C_{\mathrm{AI}})$ and $\Delta_D(C_{\mathrm{AI}})$ are non-negative functions that decrease monotonically with $C_{\mathrm{AI}}$ and vanish as $C_{\mathrm{AI}}\!\to\!\infty$.
	\end{theorem}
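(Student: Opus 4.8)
The plan is to run a standard weak-converse argument and then isolate the portion of the bound that is governed by the budget $C_{\mathrm{AI}}$. First I would fix a length-$n$ code with vanishing error probability and invoke Fano's inequality, $nR \le I(M;\hat M) + n\epsilon_n$; since $\hat M$ is produced by the communication decoder from $(Y_c^n, Z^n)$, the data-processing inequality gives $nR \le I(X^n; Y_c^n \mid Z^n) + n\epsilon_n$, and a single-letterization step (uniform time-sharing variable, chain rule, and the latent budget \eqref{eq:budget} interpreted per letter) recovers the rate-side constraints of \eqref{eq:region}, namely $R \le I(X;Y_c \mid Z)$ with $I(X;Z) \le C_{\mathrm{AI}}$. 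For the sensing task I would use the matching rate--distortion converse: the estimator $\hat{\boldsymbol{\theta}}$ is a function of $(Y_s^n, Z^n)$, so $D_s \ge f\big(I(X;Y_s \mid Z)\big)$ with $f$ the convex, non-increasing information-to-distortion map of the task.

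Next I would exploit the fork Markov chain $Z-X-(Y_c,Y_s)$, which yields the clean identities $I(X;Y_c \mid Z) = I(X;Y_c) - I(Z;Y_c)$ and $I(X;Y_s \mid Z) = I(X;Y_s) - I(Z;Y_s)$, combined with $I(X;Y_c) = I(X;Y_c,Y_s) - I(X;Y_s \mid Y_c)$. I would then simply \emph{define} $\Delta_R(C_{\mathrm{AI}}) := I(X;Y_c,Y_s) - \sup\{ I(X;Y_c \mid Z)\}$ and $\Delta_D(C_{\mathrm{AI}}) := I(X;Y_c,Y_s) - \sup\{ I(X;Y_s \mid Z)\}$, where each supremum is over representations $Z$ with $Z-X-(Y_c,Y_s)$ and $I(X;Z) \le C_{\mathrm{AI}}$. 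Because the $Z$ realized by the system is feasible for these suprema, $R \le I(X;Y_c \mid Z) \le I(X;Y_c,Y_s) - \Delta_R(C_{\mathrm{AI}})$, which is \eqref{eq:rbub}; and since $f$ is non-increasing, $D_s \ge f\big(I(X;Y_s \mid Z)\big) \ge f\big(I(X;Y_c,Y_s) - \Delta_D(C_{\mathrm{AI}})\big)$, which is \eqref{eq:dsbub}.

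It then remains to establish the three qualitative properties of $\Delta_R,\Delta_D$. Non-negativity is immediate from the Markov identities above, since $I(X;Y_c \mid Z) \le I(X;Y_c) \le I(X;Y_c,Y_s)$ and likewise for $Y_s$, so no feasible supremand exceeds the genie quantity. Monotonicity is a nesting argument: the feasible set $\{Z : I(X;Z) \le C_{\mathrm{AI}}\}$ is non-decreasing in $C_{\mathrm{AI}}$, hence the suprema are non-decreasing and $\Delta_R,\Delta_D$ are non-increasing, while convexity of the information-bottleneck trade-off curve makes the decrease continuous. For the vanishing limit I would argue that once $C_{\mathrm{AI}}$ is large enough that $Z$ can be taken arbitrarily close to a sufficient statistic for the downstream communication and sensing tasks, the budget constraint becomes inactive and the bounds collapse onto the learning-free ISAC converse, so a limiting argument along such a sequence of representations gives $\Delta_R(C_{\mathrm{AI}}),\Delta_D(C_{\mathrm{AI}}) \to 0$ as $C_{\mathrm{AI}} \to \infty$.

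I expect the main obstacle to be precisely this last step in the general (non-Gaussian) setting: the information-bottleneck supremum has no closed form, so monotonicity and the vanishing limit must be argued from structural properties (nestedness, convexity/continuity of the trade-off curve, and sufficiency) rather than by direct computation, and one must take some care that the time-sharing and single-letterization do not inflate the per-letter latent budget. This is presumably why the paper immediately specializes to Gaussian, Rayleigh/Rician, and MIMO models, where the equivalent-additive-noise representation turns these suprema into explicit functions of $C_{\mathrm{AI}}$ and pins down the exponential decay of $\Delta_R$ and $\Delta_D$.
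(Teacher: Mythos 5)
Your opening moves (Fano's inequality, the data-processing step, the single-letterization to the per-letter region, the sensing rate--distortion converse, and the fork-chain identities) are sound and coincide with how the paper begins Appendix~\ref{app:converse}. The problem is in the second half, where you \emph{define} $\Delta_R(C_{\mathrm{AI}}) := I(X;Y_c,Y_s) - \sup_Z I(X;Y_c\mid Z)$ with the supremum taken over all $Z$ satisfying $Z-X-(Y_c,Y_s)$ and $I(X;Z)\le C_{\mathrm{AI}}$. By your own identity $I(X;Y_c\mid Z) = I(X;Y_c) - I(Z;Y_c)$, that supremum is attained by the trivial (constant) latent, which is feasible for \emph{every} budget $C_{\mathrm{AI}}\ge 0$; hence $\sup_Z I(X;Y_c\mid Z) = I(X;Y_c)$ identically, and $\Delta_R(C_{\mathrm{AI}}) \equiv I(X;Y_s\mid Y_c)$ is a constant that does not vanish as $C_{\mathrm{AI}}\to\infty$ unless $Y_s$ happens to be conditionally independent of $X$ given $Y_c$ (symmetrically, $\Delta_D \equiv I(X;Y_c\mid Y_s)$). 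The nesting argument then only establishes monotonicity of a constant, and the ``sufficient statistic'' limit cannot rescue the vanishing claim: under the fork Markov chain, making $Z$ \emph{more} informative makes $I(X;Y_c\mid Z)$ \emph{smaller}, so a larger budget pushes the conditional mutual information away from the genie benchmark $I(X;Y_c,Y_s)$ rather than toward it. In short, \eqref{eq:rbub} and \eqref{eq:dsbub} hold tautologically for your gap functions and non-negativity is fine, but the two properties that give the theorem its content --- nontrivial monotone decrease and vanishing at infinity --- fail for your construction.

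For comparison, the paper instead sets the gap to $\Delta = I(X^n;Y_c^n,Y_s^n) - I(Z^n;Y_c^n,Y_s^n)$, i.e., it measures how much of the genie information the latent itself captures; this quantity does tend to zero as $Z$ becomes a faithful copy of $X$, which is the mechanism behind the ``vanishing'' clause (though be aware that the paper's intermediate step, which amounts to $I(X^n;Y_c^n\mid Z^n)\le I(Z^n;Y_c^n,Y_s^n)$, is asserted rather than derived and also breaks down for uninformative $Z$). To obtain a defensible version of the qualitative claims you would need either to adopt a gap of the paper's form and restrict attention to latents that saturate the budget, so that $I(Z;Y_c,Y_s)$ genuinely grows with $C_{\mathrm{AI}}$, or to work in the Gaussian surrogate model of Section~\ref{sec:gauss}, where \eqref{eq:Nz} makes the decay of the gap explicit.
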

	
	\begin{proof}
		From the data processing inequality, $I(X;Y_c,Y_s) \ge I(Z;Y_c,Y_s)$ since $Z$ is a function of $X$. Because the learning module is limited by $I(X;Z)\le C_{\mathrm{AI}}$, let $\Delta = I(X;Y_c,Y_s) - I(Z;Y_c,Y_s) \ge 0$. Combining this with the standard converse bound based on Fano's inequality gives~\eqref{eq:rbub}. The sensing inequality~\eqref{eq:dsbub} follows from applying the same gap $\Delta$ within the information-to-distortion function $f(\cdot)$. Detailed derivations are provided in Appendix~\ref{app:converse}.
	\end{proof}
	
	\begin{theorem}[Achievability]\label{th:ach}
		There exists a stochastic encoder $p(z|x)$ satisfying $I(X;Z)\le C_{\mathrm{AI}}$ and corresponding decoders such that
		\begin{align}
			R &\ge I(X;Y_c|Z) - \epsilon_R, \\
			D_s &\le f^{-1}\!\big(I(X;Y_s|Z)\big) + \epsilon_D,
		\end{align}
		where $\epsilon_R, \epsilon_D \to 0$ as the blocklength $n \to \infty$.
	\end{theorem}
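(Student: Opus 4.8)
The plan is to construct a random-coding scheme with an explicit stochastic encoder $p(z\mid x)$ that saturates the capacity constraint $I(X;Z)\le C_{\mathrm{AI}}$, treating $Z^n$ as genie-provided side information available at both receivers (as assumed in the system model). First I would fix an input distribution $p(x)$ and a test channel $p(z\mid x)$ with $I(X;Z)=C_{\mathrm{AI}}$ (or anything $\le C_{\mathrm{AI}}$, by adjusting the test channel), so that the Markov fork $Z-X-(Y_c,Y_s)$ holds by construction. Generate a codebook of $2^{nR}$ codewords $X^n(m)$ i.i.d.\ from $\prod p(x_t)$; the induced latent sequence is $Z^n$, passed to both decoders. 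The communication decoder performs joint typicality decoding of $m$ from $(Y_c^n, Z^n)$; the sensing decoder forms an estimate $\hat{\boldsymbol\theta}$ from $(Y_s^n, Z^n)$ via the estimator that achieves the information-to-distortion mapping $f$.

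The key steps, in order, are: (i) \emph{Codebook and conditional typicality.} Show that with high probability the triple $(X^n(m),Y_c^n,Z^n)$ is jointly typical, and that for $m'\ne m$ the probability that $(X^n(m'),Y_c^n,Z^n)$ is jointly typical is bounded by $2^{-n(I(X;Y_c\mid Z)-\delta)}$; this uses the conditional joint-typicality lemma together with the fact that $X^n(m')$ is independent of $(Y_c^n,Z^n)$ once we condition appropriately. A union bound over the $2^{nR}$ competing messages then gives vanishing error probability provided $R < I(X;Y_c\mid Z) - \epsilon_R$, establishing the rate bound. (ii) \emph{Sensing distortion.} Given $(Y_s^n,Z^n)$, the conditional mutual information $I(X;Y_s\mid Z)$ is, by hypothesis on $f$, exactly the quantity whose inverse image under $f$ upper-bounds the achievable per-symbol distortion; invoking the (rate-)distortion converse/achievability pairing that defines $f$, one obtains an estimator with expected distortion at most $f^{-1}\!\big(I(X;Y_s\mid Z)\big) + \epsilon_D$ for $n$ large. (iii) \emph{Capacity constraint.} Since $Z_t = f_{\phi_e}(X_t)$ (or is drawn from $p(z\mid x_t)$ in the stochastic case), $I(X^n;Z^n) \le \sum_t I(X_t;Z_t) = n\,I(X;Z) \le n\,C_{\mathrm{AI}}$ by the chain rule and the memoryless/product structure, so the per-symbol constraint \eqref{eq:budget} is met with equality in the limit. (iv) \emph{Wrap-up.} Standard expurgation/averaging arguments pass from the ensemble to a single deterministic codebook, and $\epsilon_R,\epsilon_D\to 0$ as $n\to\infty$.

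The main obstacle I anticipate is step (ii): making precise the sense in which $f$ is an ``information-to-distortion mapping'' so that $I(X;Y_s\mid Z)$ genuinely controls the achievable $D_s$. For the Gaussian/CRLB surrogate used later in the paper this is clean — the conditional MI and the MMSE/Fisher information are linked by an explicit formula — but in general one must either (a) assume $f$ is the distortion-rate function of the estimation problem so that the rate-distortion achievability theorem applies verbatim to the ``source'' $\boldsymbol\theta$ observed through $(Y_s,Z)$, or (b) restrict to log-loss distortion, where $I(X;Y_s\mid Z)$ is exactly the reduction in uncertainty and the optimal soft estimator is the posterior. I would handle this by stating the needed regularity/monotonicity hypothesis on $f$ explicitly (monotone decreasing, continuous, matching the CRLB in the Gaussian case), which is consistent with how $f$ is introduced in the region definition \eqref{eq:region}, and deferring the two concrete instantiations (Gaussian CRLB, log-loss) to the later sections and the appendix. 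A secondary subtlety is the joint error event — one must verify that the communication-decoding error and the sensing-estimation error can be controlled simultaneously, which follows because they depend on disjoint physical observations $Y_c^n$ and $Y_s^n$ conditioned on $(X^n,Z^n)$, so a simple union bound over the two events suffices.
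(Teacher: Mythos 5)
Your proposal follows essentially the same route as the paper's proof: a random codebook with a stochastic test channel $p(z\mid x)$ meeting the budget, joint-typicality decoding of the message conditioned on $Z^n$, and an MMSE/rate--distortion argument mapping $I(X;Y_s\mid Z)$ to the sensing distortion via $f$. Your added care in step (ii) about what exactly qualifies $f$ as an ``information-to-distortion mapping'' is well placed — the paper's own proof simply invokes the rate--distortion covering lemma without stating the regularity hypothesis you identify — but this is a refinement of, not a departure from, the paper's argument.
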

	
	\begin{proof}
		The proof follows standard random coding arguments with an auxiliary random variable $Z$ satisfying the given MI constraint. Separate decoders for communication and sensing operate conditionally on $Z$. Rate-distortion achievability results then yield the desired performance bounds. Details are provided in Appendix~\ref{app:ach}.
	\end{proof}
	
	Theorems~\ref{th:converse} and~\ref{th:ach} together define an achievable region that shrinks as the available learning capacity $C_{\mathrm{AI}}$ decreases and approaches the classical ISAC limits as $C_{\mathrm{AI}}\!\to\!\infty$.
	
	\subsection{Sensing Information Map}\label{subsec:rdmap}
	To relate the information available for sensing to the corresponding estimation distortion, we introduce a sensing rate-distortion function. For the canonical case of minimum mean-squared error (MMSE) estimation over a complex Gaussian channel, we have
	\begin{align}
		D_s = \frac{\sigma_\theta^2}{1+\mathrm{SNR}_s^{\mathrm{eff}}},
		%\quad\Longleftrightarrow\quad
	%	f(I) = \frac{\sigma_\theta^2}{1+(2^{I}-1)}, \label{eq:gaussRD}
	\end{align}
	and using $I=\log_2\left(1+\mathrm{SNR}_s^\mathrm{eff}\right)$, 
	
	\begin{align}
f(I) = \frac{\sigma_\theta^2}{1+(2^{I}-1)}, \label{eq:gaussRD}
	\end{align}
	where $\mathrm{SNR}_s^{\mathrm{eff}}$ denotes the effective sensing signal-to-noise ratio (SNR) at the estimator output and $\sigma^2_\theta$ denotes the variance of the sensing parameter $\theta$. Besides, the function $f\left(\cdot\right)$ follows from the Gaussian MMSE sensing model by expressing the effective sensing SNR in terms of MI. This relation establishes a direct mapping between the MI $I(X;Y_s|Z)$ and the achievable sensing distortion $D_s$.
	
	\subsection{Learning-Information Trade-Off Law}
	The inclusion of a finite-capacity learning module introduces an additional dimension to the performance trade-off, linking achievable rate-distortion pairs to the generalization ability of the learning model itself.
	
	\begin{theorem}[Learning-Information Trade-Off]\label{th:lit}
		Let $\epsilon_{\mathrm{gen}}$ denote the generalization error of the learning module trained on $n_{\mathrm{tr}}$ samples. Under the MI constraint \eqref{eq:budget} and mild regularity conditions (see Appendix~\ref{app:gen}), there exist positive constants $\delta$ and $\beta$ such that
		\begin{equation}
			R + \delta D_s \le I(X;Y_c,Y_s) - \beta\, \epsilon_{\mathrm{gen}}.
			\label{eq:lit}
		\end{equation}
	\end{theorem}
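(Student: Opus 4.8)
The plan is to derive the trade-off law directly from the converse of Theorem~\ref{th:converse} and then to re-express its information-gap terms $\Delta_R(C_{\mathrm{AI}})$ and $\Delta_D(C_{\mathrm{AI}})$ as a single penalty controlled by the generalization error. First I would fix an operating point on the lower boundary of $\mathcal{R}_{\mathrm{AI\!-\!ISAC}}(C_{\mathrm{AI}})$, where $R=I(X;Y_c\mid Z)$ and $D_s=f(I(X;Y_s\mid Z))$ for the optimal encoder $p(z\mid x)$, and combine \eqref{eq:rbub} with \eqref{eq:dsbub}. Because the sensing map of \eqref{eq:gaussRD} satisfies $f(I)=\sigma_\theta^2\,2^{-I}$, it is strictly decreasing and bi-Lipschitz on any compact SNR range, so the distortion bound can be converted into a bound on an \emph{information deficit} with multiplicative constants equal to the local Lipschitz constants of $f$ and $f^{-1}$. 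Choosing $\delta$ equal to that constant and adding the two inequalities yields, after using the fork identity $I(X;Y_c,Y_s\mid Z)=I(X;Y_c,Y_s)-I(Z;Y_c,Y_s)$ implied by $Z-X-(Y_c,Y_s)$, a bound of the form $R+\delta D_s\le I(X;Y_c,Y_s)-\Theta(C_{\mathrm{AI}})$, where $\Theta(C_{\mathrm{AI}})\ge 0$ is nonincreasing in $C_{\mathrm{AI}}$ and is lower bounded by the \emph{wasted information} $I(X;Z)-I(Z;Y_c,Y_s)$, i.e.\ the bits the module extracts from $X$ that never reach the downstream communication and sensing tasks.

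The second and central step is to show $\Theta(C_{\mathrm{AI}})\ge\beta\,\epsilon_{\mathrm{gen}}$. Here I would invoke an information-theoretic generalization inequality of the Russo--Zou / Xu--Raginsky type: for a learner producing the latent $Z$ from $n_{\mathrm{tr}}$ i.i.d.\ training samples with a $\sigma^2$-sub-Gaussian per-sample loss, $\epsilon_{\mathrm{gen}}\le\sqrt{2\sigma^2\,I(X;Z)/n_{\mathrm{tr}}}$, equivalently $I(X;Z)\ge n_{\mathrm{tr}}\epsilon_{\mathrm{gen}}^2/(2\sigma^2)$. Since the budget \eqref{eq:budget} caps $I(X;Z)\le C_{\mathrm{AI}}$, the fraction of those bits that is \emph{not} forced to be spent fitting sample idiosyncrasies and can instead be converted into task-relevant information $I(Z;Y_c,Y_s)$ shrinks accordingly; feeding this into $\Theta(C_{\mathrm{AI}})\ge I(X;Z)-I(Z;Y_c,Y_s)$ and combining with the generalization inequality produces $\Theta(C_{\mathrm{AI}})\ge\beta\,\epsilon_{\mathrm{gen}}$ with $\beta$ of order $\sqrt{n_{\mathrm{tr}}}/\sigma$, up to the curvature constant of $f$ already absorbed into $\delta$, which is exactly \eqref{eq:lit}. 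The ``mild regularity conditions'' needed in Appendix~\ref{app:gen} are then precisely (i) sub-Gaussianity (or bounded moments) of the task loss, (ii) bi-Lipschitzness of $f$ on the SNR range of interest, and (iii) matched training and deployment laws for $X$, so that $\epsilon_{\mathrm{gen}}$ is the only train/test discrepancy; letting $n_{\mathrm{tr}}\to\infty$ sends $\epsilon_{\mathrm{gen}}\to 0$ and recovers the unconstrained ISAC sum bound, consistent with Theorems~\ref{th:converse} and~\ref{th:ach}.

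I expect the main obstacle to be the direction of the inequality: classical mutual-information generalization bounds \emph{upper} bound $\epsilon_{\mathrm{gen}}$ in terms of stored information, whereas \eqref{eq:lit} needs a \emph{lower} bound on the information deficit $\Theta$ in terms of $\epsilon_{\mathrm{gen}}$. The resolution I would pursue is a change-of-measure (Donsker--Varadhan) argument applied not to the generalization bound in its usual form but to the deployment-time excess loss directly: write the test-time rate/distortion loss as an expectation under the deployment law, tilt it against the training-induced law at the cost of a KL term, bound that KL by $I(X;Z)\le C_{\mathrm{AI}}$, and optimize the tilting parameter to turn the $\sqrt{\cdot}$ dependence into the required linear-in-$\epsilon_{\mathrm{gen}}$ lower bound on $\Theta$. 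A secondary subtlety, which I would handle by restricting to a fixed compact operating region and taking worst-case constants there, is that $\delta$ and $\beta$ are not universal but depend on the operating point through the curvature of $f$ and the sub-Gaussian parameter; fixing such a region makes them legitimate positive constants and yields the stated inequality.
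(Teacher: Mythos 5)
Your proposal follows essentially the same route as the paper's own argument: combine the converse of Theorem~\ref{th:converse} with the Xu--Raginsky information-stability bound $\epsilon_{\mathrm{gen}} \le \sqrt{2 I(S;\Phi)/n_{\mathrm{tr}}}$ and the Markov chain $S \rightarrow X \rightarrow Z$, which is exactly what Appendix~\ref{app:gen} does (your additional bookkeeping --- choosing $\delta$ from the local Lipschitz constant of $f(I)=\sigma_\theta^2 2^{-I}$ on a compact operating region --- is consistent with the paper's appeal to ``mild regularity conditions''). The one place where you go beyond the paper is also the most important one: you correctly observe that the information-theoretic generalization bound runs in the wrong direction, since it \emph{upper}-bounds $\epsilon_{\mathrm{gen}}$ by a function of $I(X;Z)\le C_{\mathrm{AI}}$, whereas \eqref{eq:lit} needs the information deficit to be \emph{lower}-bounded by $\beta\,\epsilon_{\mathrm{gen}}$. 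The paper's proof does not resolve this either; it simply asserts that ``substituting this relationship'' into the generalization bound ``yields'' \eqref{eq:lit} with constants absorbed into $\beta$, which as written establishes only $\epsilon_{\mathrm{gen}} \le O(\sqrt{C_{\mathrm{AI}}/n_{\mathrm{tr}}})$ and not the claimed inequality (note that for large $C_{\mathrm{AI}}$ the converse deficit vanishes while the upper bound on $\epsilon_{\mathrm{gen}}$ grows, so the implication cannot be purely formal). Your Donsker--Varadhan tilting of the deployment-time excess loss is a plausible way to close this step, but it is left as a sketch; as things stand, your argument and the paper's share the same open step, and yours has the merit of naming it explicitly.
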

	
	\begin{proof}
		The result follows by combining information-stability bounds on the generalization error with the converse in Theorem~\ref{th:converse}. The term $\epsilon_{\mathrm{gen}}$ effectively reduces the MI available for both communication and sensing tasks, tightening the achievable frontier. Full derivations are presented in Appendix~\ref{app:gen}.
	\end{proof}
	
	The above theorem provides an intuitive interpretation that learning capacity and generalization directly limit how closely a practical AI-aided ISAC system can approach the ideal information-theoretic bounds. As $C_{\mathrm{AI}}$ and the quality of training increase, the gap to the optimal rate-distortion frontier narrows accordingly.
	
\section{Closed-Form Analysis: Gaussian, Rayleigh, Rician, and MIMO}\label{sec-perform}

This section provides explicit analytical expressions for the achievable performance of the proposed AI-ISAC framework under several representative channel models. We begin with the complex Gaussian case, which offers the cleanest closed-forms and reveals the fundamental role of the AI capacity parameter. The results are then extended to Rayleigh and Rician fading environments, followed by a generalization to MIMO systems through a covariance mapping lemma.

\subsection{Complex Gaussian ISAC}\label{sec:gauss}
We consider the baseband model in \eqref{eq:cobs} and \eqref{eq:sobs} with $X\sim\mathcal{CN}(0,P)$. The finite-capacity learning module can be modeled through a Gaussian auxiliary channel, that compresses $X$ into a latent representation, i.e.,
\begin{equation}
	Z = X + W_z, \qquad W_z \sim \mathcal{CN}(0, N_z), \quad W_z \perp X,
	\label{eq:zmodel}
\end{equation}
where $W_z$ represents the effective AI noise induced by the limited learning capacity. The corresponding MI satisfies
\begin{equation}
	I(X;Z) = \log_2\!\left(1+\frac{P}{N_z}\right) \le C_{\mathrm{AI}},
\end{equation}
which leads to the equivalent noise variance as
\begin{equation}
	N_z = \frac{P}{2^{C_{\mathrm{AI}}}-1}.
	\label{eq:Nz}
\end{equation}

\subsubsection{Effective SNRs}  
The capacity limitation manifests as a degradation in both communication and sensing SNRs, i.e., 
\begin{align}
	\tilde{\gamma}_c &= \frac{|h_c|^2 P}{N_c + |h_c|^2 N_z}, \qquad
	\tilde{\gamma}_s = \frac{|h_s|^2 P}{N_s + |h_s|^2 N_z}.
	\label{eq:effsnr}
\end{align}

\subsubsection{Closed-form performance}  
The achievable communication rate and the resulting sensing distortion under the AI constraint are given by
\begin{align}
	R_{\mathrm{AI}} &= \log_2\!\left(1+\tilde{\gamma}_c\right), \label{eq:rateG}
	\end{align}
	and
	\begin{align}
	D_{s,\mathrm{AI}} &= \frac{\sigma_\theta^2}{1+\tilde{\gamma}_s}. \label{eq:distG}
\end{align}

\begin{remark}[Scaling Behavior]
	As $C_{\mathrm{AI}}\!\to\!\infty$, the artificial noise (AN) variance $N_z$ in \eqref{eq:Nz} tends to zero, and the system approaches the classical Shannon capacity and CRLB limits. Conversely, when $C_{\mathrm{AI}}\!\rightarrow\!0^+$, the virtual noise becomes dominant and both communication and sensing performance degrade sharply. The performance gap with respect to the ideal case scales approximately as $\mathcal{O}(2^{-C_{\mathrm{AI}}})$, providing a simple rule of thumb linking model capacity to achievable gain.
\end{remark}

\subsection{Rayleigh Block Fading}
To capture realistic wireless environments, we now consider Rayleigh block fading where $h_i \!\sim\! \mathcal{CN}(0,1)$ and define $\bar{\gamma}_i=P/N_i$. Conditioned on the fading magnitude $|h_i|^2=x$, the effective SNR becomes
\begin{align}
\tilde{\gamma}_i = \frac{x\,\bar{\gamma}_i}{1 + x\,\bar{\gamma}_i\,\kappa},
\qquad \kappa := \frac{N_z}{P} = \frac{1}{2^{C_{\mathrm{AI}}}-1}.
\end{align}
The corresponding ergodic rate is defined as
\begin{equation}
	\bar{R}_{\mathrm{AI}}
	= \frac{1}{\ln 2} \int_{0}^{\infty}
	\ln\!\left(1+\frac{x\,\bar{\gamma}_c}{1+x\,\bar{\gamma}_c\,\kappa}\right)
	e^{-x} dx.
	\label{eq:rayInt}
\end{equation}
Although the integral does not admit a simple closed-form, it can be tightly approximated using standard inequalities, e.g., Jensen bound, or computed numerically via Gauss-Laguerre quadrature as follows
\begin{equation}
	\bar{R}_{\mathrm{AI}}
	\approx \frac{1}{\ln 2} \sum_{m=1}^{M} w_m
	\ln\!\left(1+\frac{\xi_m \bar{\gamma}_c}{1+\xi_m \bar{\gamma}_c \kappa}\right),
\end{equation}
where $(\xi_m, w_m)$ denote the quadrature nodes and weights. The corresponding average sensing distortion is obtained analogously by replacing $\bar{\gamma}_c$ with $\bar{\gamma}_s$ in \eqref{eq:distG} and averaging over the fading distribution (see appendix~\ref{app:ray}).

\subsection{Rician Fading}
For environments with a deterministic line-of-sight (LoS) component, we model the channel as $h_i = \mu_i + \tilde{h}_i$, where $\tilde{h}_i\!\sim\!\mathcal{CN}(0,1)$ and the Rician factor is $K_i = |\mu_i|^2$. With $x=|h_i|^2$ following a non-central $\chi^2$ distribution, the ergodic rate can be written as
\begin{equation}
	\bar{R}_{\mathrm{AI}}^{\mathrm{Ric}}
	= \frac{1}{\ln 2} \int_{0}^{\infty}
	\ln\!\left(1+\frac{x\,\bar{\gamma}_c}{1+x\,\bar{\gamma}_c\,\kappa}\right)
	f_{|h|^2}^{\mathrm{Ric}}(x;K)\, dx,
\end{equation}
where $f_{|h|^2}^{\mathrm{Ric}}(x;K)$ denotes the non-central chi-square probability density function (PDF). A compact and accurate approximation can be obtained through moment matching, using $\mathbb{E}[x]=1+K$, which yields
\begin{equation}
	\bar{R}_{\mathrm{AI}}^{\mathrm{Ric}}
	\approx \log_2\!\left(1+\frac{(1+K)\bar{\gamma}_c}
	{1+(1+K)\bar{\gamma}_c\,\kappa}\right).
\end{equation}
This approximation provides excellent accuracy for moderate $K$ values, while more precise bounds can be derived via the Marcum-$Q$ representation (see Appendix~\ref{app:ric}).

\subsection{MIMO ISAC}
We now consider a MIMO ISAC system with $N_t$ transmit antennas and $N_r$ receive antennas. Therefore, the received communication and sensing signals $\mathbf{Y}_c, \mathbf{Y}_s \in \mathbb{C}^{N_r \times n}$ with $n$ denoting the blocklength are defined respectively as 
\begin{align}
	\mathbf{Y}_c = \mathbf{H}_c \mathbf{X} + \mathbf{N}_c, \qquad
	\mathbf{Y}_s = \mathbf{H}_s \mathbf{X} + \mathbf{N}_s,
\end{align}
where $\mathbf{X} \in \mathbb{C}^{N_t \times n}$ denotes the transmitted signal matrix,  $\mathbf{N}_c, \mathbf{N}_s \in \mathbb{C}^{N_r \times n}$ denote the noise matrices including i.i.d. $\mathcal{CN}(0,N_0)$ entries, and 
$\mathbf{H}_c, \mathbf{H}_s \in \mathbb{C}^{N_r \times N_t}$ are the communication and sensing channel matrices. The limited learning capacity is modeled by an \emph{AI-noise covariance} $\mathbf{R}_z \succeq 0$ that satisfies a capacity-budget mapping described in Lemma~\ref{lem:covmap}. The achievable communication rate then takes the form
\begin{equation}
	R_{\mathrm{AI}}^{\mathrm{MIMO}}
	= \log_2 \det\!\Big(
	\mathbf{I}_{N_r} +
	\mathbf{H}_c \mathbf{Q} \mathbf{H}_c^{H}
	(\mathbf{R}_c + \mathbf{H}_c \mathbf{R}_z \mathbf{H}_c^{H})^{-1}
	\Big),
	\label{eq:mimoRate}
\end{equation}
where  $\mathbf{Q}=\mathbb{E}[\mathbf{X}\mathbf{X}^H]$ the transmit covariance matrix. 
For the sensing function, the Fisher information of a linear Gaussian parameter model yields
\begin{equation}
	\mathcal{I}_\theta
	= \frac{\partial \boldsymbol{\mu}^H}{\partial \theta}
	\Big(\mathbf{R}_s + \mathbf{H}_s \mathbf{R}_z \mathbf{H}_s^H\Big)^{-1}
	\frac{\partial \boldsymbol{\mu}}{\partial \theta},
	\,\,
	\mathrm{CRLB}(\theta) \ge \mathcal{I}_\theta^{-1}.
\end{equation}

\begin{lemma}[Covariance Mapping]\label{lem:covmap}
	Let $\mathbf{X}\!\sim\!\mathcal{CN}(\mathbf{0},\mathbf{Q})$ and $\mathbf{Z}=\mathbf{X}+\mathbf{W}$ with $\mathbf{W}\!\sim\!\mathcal{CN}(\mathbf{0},\mathbf{R}_z)$ independent of $\mathbf{X}$. Then
	\begin{equation}
		I(\mathbf{X};\mathbf{Z})
		= \log_2 \frac{\det(\mathbf{Q}+\mathbf{R}_z)}{\det(\mathbf{R}_z)}
		\le C_{\mathrm{AI}}.
	\end{equation}
	A feasible $\mathbf{R}_z$ is any matrix satisfying
	$\det(\mathbf{I}+\mathbf{R}_z^{-1}\mathbf{Q}) \le 2^{C_{\mathrm{AI}}}$.
	The minimum-trace solution is obtained when $\mathbf{R}_z = \zeta \mathbf{Q}$ with 
	\[
	\zeta = (2^{C_{\mathrm{AI}}/r}-1)^{-1},
	\]
	on the $r=\mathrm{rank}(\mathbf{Q})$ active subspace, where $\zeta$ is a noise-scaling factor.
\end{lemma}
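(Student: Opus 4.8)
The plan is to dispatch the three claims in order: the closed form for $I(\mathbf{X};\mathbf{Z})$, the feasibility condition, and the minimum-trace choice of $\mathbf{R}_z$. First I would establish the mutual-information identity and rewrite it. Because $\mathbf{X}\sim\CN(\mathbf{0},\mathbf{Q})$ and $\mathbf{W}\sim\CN(\mathbf{0},\mathbf{R}_z)$ are independent, $\mathbf{Z}=\mathbf{X}+\mathbf{W}$ is proper complex Gaussian with covariance $\mathbf{Q}+\mathbf{R}_z$, while $\mathbf{Z}\mid\mathbf{X}$ has the law of $\mathbf{W}$ up to a shift; using $h(\mathbf{V})=\log_2\det(\pi e\,\mathbf{K})$ for $\mathbf{V}\sim\CN(\mathbf{0},\mathbf{K})$, this gives $I(\mathbf{X};\mathbf{Z})=h(\mathbf{Z})-h(\mathbf{W})=\log_2\det(\mathbf{Q}+\mathbf{R}_z)-\log_2\det\mathbf{R}_z$, which is the claimed ratio, and the learning budget then imposes $I(\mathbf{X};\mathbf{Z})\le C_{\mathrm{AI}}$. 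By multiplicativity of the determinant, $\det(\mathbf{Q}+\mathbf{R}_z)/\det\mathbf{R}_z=\det\!\big(\mathbf{R}_z^{-1}(\mathbf{Q}+\mathbf{R}_z)\big)=\det(\mathbf{I}+\mathbf{R}_z^{-1}\mathbf{Q})$, so the budget is equivalent to $\det(\mathbf{I}+\mathbf{R}_z^{-1}\mathbf{Q})\le 2^{C_{\mathrm{AI}}}$ — the stated feasibility characterization. The identity needs $\mathbf{R}_z\succ0$; when $\mathbf{Q}$ is singular I would restrict attention to $\mathrm{range}(\mathbf{Q})$, noting that any part of $\mathbf{R}_z$ supported on $\mathrm{range}(\mathbf{Q})^{\perp}$ cancels in $\mathbf{R}_z^{-1}\mathbf{Q}$, hence does not relax the constraint and only inflates $\tr\mathbf{R}_z$, so it may be set to zero without loss.

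For the minimum-trace solution I would solve $\min_{\mathbf{R}_z\succ0}\tr\mathbf{R}_z$ subject to $\log_2\det(\mathbf{I}+\mathbf{R}_z^{-1}\mathbf{Q})\le C_{\mathrm{AI}}$ on the $r$-dimensional range of $\mathbf{Q}$; this is a convex program, since $\mathbf{R}_z\mapsto\log\det(\mathbf{R}_z+\mathbf{Q})-\log\det\mathbf{R}_z$ is convex on the positive-definite cone. Restricting first to the matched family $\mathbf{R}_z=\zeta\mathbf{Q}$, the constraint becomes $(1+1/\zeta)^{r}\le 2^{C_{\mathrm{AI}}}$, i.e.\ $\zeta\ge(2^{C_{\mathrm{AI}}/r}-1)^{-1}$, while $\tr(\zeta\mathbf{Q})=\zeta\,\tr\mathbf{Q}$ increases in $\zeta$, so the best matched choice is $\zeta=(2^{C_{\mathrm{AI}}/r}-1)^{-1}$, which at $r=1$ recovers $N_z=P/(2^{C_{\mathrm{AI}}}-1)$. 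To argue that this is trace-optimal among \emph{all} feasible $\mathbf{R}_z$ on an isotropic active subspace, I would whiten via $\mathbf{A}:=\mathbf{Q}^{-1/2}\mathbf{R}_z\mathbf{Q}^{-1/2}$, so that $\tr\mathbf{R}_z=\tr(\mathbf{Q}\mathbf{A})$ while $\mathbf{I}+\mathbf{R}_z^{-1}\mathbf{Q}$ is similar to $\mathbf{I}+\mathbf{A}^{-1}$ and the constraint $\det(\mathbf{I}+\mathbf{A}^{-1})\le 2^{C_{\mathrm{AI}}}$ depends only on the spectrum of $\mathbf{A}$; von Neumann's trace inequality then shows there is an optimizer with $\mathbf{A}$, hence $\mathbf{R}_z$, diagonal in the eigenbasis of $\mathbf{Q}$, reducing the problem to $\min\sum_i q_i a_i$ subject to $\sum_i\log_2(1+1/a_i)\le C_{\mathrm{AI}}$. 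When the active eigenvalues $q_i$ of $\mathbf{Q}$ coincide — the canonical case $\mathbf{Q}|_{\mathrm{range}(\mathbf{Q})}=(P/r)\mathbf{I}$ — the sublevel set $\{a:\sum_i\log_2(1+1/a_i)\le C_{\mathrm{AI}}\}$ is convex and permutation-symmetric and the objective is linear, so the unique minimizer has all $a_i$ equal with the budget tight, giving $a_i=\zeta=(2^{C_{\mathrm{AI}}/r}-1)^{-1}$ and $\mathbf{R}_z=\zeta\mathbf{Q}$; for a general $\mathbf{Q}$ the stationarity condition yields instead the water-filling form $\rho_i=\tfrac12\big(\sqrt{q_i^2+4\nu q_i}-q_i\big)$, which collapses to $\zeta\mathbf{Q}$ exactly when the $q_i$ are equal, the proportional choice remaining feasible in all cases.

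The Gaussian-entropy algebra and the determinant factorization are routine; the substantive part is the matrix optimization — proving that the trace-minimizer aligns with $\mathbf{Q}$ and that the $\log\det$ budget is shared equally across the $r$ active modes. I expect the cleanest self-contained route to be: reduce to $\mathrm{range}(\mathbf{Q})$, invoke von Neumann's inequality to diagonalize the optimizer against $\mathbf{Q}$, and then use convexity of the sublevel set together with permutation symmetry of the isotropic subspace to force equal $a_i$ and a binding constraint, thereby pinning $\zeta=(2^{C_{\mathrm{AI}}/r}-1)^{-1}$ without an explicit Lagrange-multiplier sweep.
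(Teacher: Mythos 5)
Your derivation of the mutual-information identity and of the feasibility condition $\det(\mathbf{I}+\mathbf{R}_z^{-1}\mathbf{Q})\le 2^{C_{\mathrm{AI}}}$ coincides with the paper's (Gaussian differential entropies, $I=h(\mathbf{Z})-h(\mathbf{W})$, determinant multiplicativity, restriction to $\mathcal{R}(\mathbf{Q})$). The interesting divergence is in the minimum-trace step, where your treatment is actually \emph{more} careful than the paper's and exposes a genuine flaw in its argument. The paper diagonalizes in the eigenbasis of $\mathbf{Q}$ (via Hadamard; your von Neumann/whitening route serves the same purpose, and both would be cleanest via the standard averaging-over-sign-matrices argument using convexity of $\mathbf{R}\mapsto\log\det(\mathbf{R}+\boldsymbol{\Lambda})-\log\det\mathbf{R}$), arrives at the same scalar program $\min\sum_i\rho_i$ s.t.\ $\sum_i\ln(1+\lambda_i/\rho_i)\le\ln\Gamma$, and then asserts that the stationarity condition $\rho_i(\rho_i+\lambda_i)=\mu\lambda_i$ implies $\rho_i=\zeta\lambda_i$ with a \emph{common} $\zeta$. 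As you correctly observe, that quadratic gives $\rho_i=\tfrac12\bigl(\sqrt{\lambda_i^2+4\mu\lambda_i}-\lambda_i\bigr)$, and substituting $\rho_i=\zeta\lambda_i$ forces $\zeta(\zeta+1)\lambda_i=\mu$ for every $i$, which is only consistent when all active eigenvalues coincide. A two-mode example with $\lambda_1=4$, $\lambda_2=1$, $C_{\mathrm{AI}}=2$ confirms this: the proportional choice gives $\tr\mathbf{R}_z=5$, while the water-filling solution achieves $\tr\mathbf{R}_z\approx 4.37$ at the same budget. So the lemma's claim that $\mathbf{R}_z=\zeta\mathbf{Q}$ is the minimum-trace solution holds only for $\mathbf{Q}$ isotropic on its range (or $r=1$); in general it is merely a feasible, eigenbasis-aligned choice. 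Your proposal states exactly this qualification and supplies the correct general optimizer, so it should be regarded as a corrected version of the paper's proof rather than an equivalent one; if the lemma is to be kept as stated, either the hypothesis ``$\mathbf{Q}|_{\mathcal{R}(\mathbf{Q})}\propto\mathbf{I}$'' must be added or ``minimum-trace'' should be weakened to ``a feasible proportional'' solution.
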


\begin{proof}
	The result follows from the closed-form expression for MI in Gaussian vector channels and the eigenvalue decomposition of $\mathbf{Q}$. Detailed derivations are provided in Appendix~\ref{app:mimo}.
\end{proof}
	
	\section{Resource Allocation under Learning Constraint}\label{sec-res}
	In this section, we investigate how the available transmission resources should be allocated between communication and sensing when the transceiver operates under a finite learning capacity. In practical ISAC systems, both power and time resources must be jointly managed, as the learning module introduces additional coupling between the two tasks through the capacity parameter $C_{\mathrm{AI}}$.
	
	\subsection{Problem Formulation}
	We consider a transmission frame with total power and duration $(P, T)$ that are divided between communication and sensing, i.e., 
	$
	P = P_c + P_s$ and $T = T_c + T_s$, respectively. 
	
	The resource allocation problem aims to balance the achievable communication rate and sensing accuracy according to a weighting parameter $\lambda \in [0,1]$. For the Gaussian ISAC model, the optimization problem can be formulated as
\begin{subequations}
	\begin{align}\label{eq:allocOpt}
	&	\max_{P_c,P_s,T_c,T_s}
		\quad
		\lambda\, R_{\mathrm{AI}}(P_c,T_c)
		- (1-\lambda)\, D_{s,\mathrm{AI}}(P_s,T_s)\\
	&\qquad	\text{s.t.} \qquad \quad P_c + P_s = P,\\
	&	\qquad \qquad \qquad T_c + T_s = T,\\
	& \qquad \qquad \qquad	I(X;Z) \le C_{\mathrm{AI}}.
	\end{align}
\end{subequations}
	The objective combines the communication benefit and sensing cost into a single scalar function that reflects the system’s operational priority. The last constraint ensures that the learning module does not exceed its information processing capacity.
	
	\subsection{KKT Conditions and Interpretation}
	Applying the KKT conditions to the problem in \eqref{eq:allocOpt} yields a pair of coupled optimality equations that can be interpreted as a form of \emph{learning-constrained} waterfilling. Specifically,
	\begin{align}
		\frac{\partial R_{\mathrm{AI}}}{\partial P_c}
		&= \eta
		+ \underbrace{\frac{\partial N_z}{\partial P_c}\cdot \Psi_c}_{\text{capacity coupling}},
		\label{eq:kktC}\\[2pt]
		\frac{\partial D_{s,\mathrm{AI}}}{\partial P_s}
		&= \eta'
		+ \underbrace{\frac{\partial N_z}{\partial P_s}\cdot \Psi_s}_{\text{capacity coupling}},
		\label{eq:kktS}
	\end{align}
	where $\eta$ and $\eta'$ are the Lagrange multipliers associated with the total power and time constraints, respectively, and $\Psi_c$, $\Psi_s$ represent the sensitivity of each task to changes in the effective AI noise power $N_z$. 
	
	Because the latent representation is itself affected by the transmit energy, the equivalent noise variance $N_z$ follows \eqref{eq:Nz} with $P$ replaced by the instantaneous energy assigned to the latent. This creates a non-trivial coupling between communication and sensing, which distinguishes the learning-constrained case from classical power allocation problems.
	
	\subsection{Closed-Form Solution for the Gaussian Case}
	For Gaussian ISAC channels, the system of equations in \eqref{eq:kktC} and \eqref{eq:kktS} admits an explicit analytical solution for $(P_c, P_s)$. The resulting expressions can be written in terms of the Lambert-$W$ function, which naturally appears when solving transcendental equations involving $P$ both in the numerator and denominator of logarithmic terms. Full derivations and detailed expressions are provided in Appendix~\ref{app:alloc}.
	
	The optimal solution demonstrates that, under tight learning capacity budgets, more power should be directed toward the component (sensing or communication) that exhibits higher information sensitivity with respect to $C_{\mathrm{AI}}$. As $C_{\mathrm{AI}}$ increases, the allocation gradually converges to the classical waterfilling solution, recovering standard ISAC resource trade-offs.

	\section{Deep Model Realization and Complexity}

	This section outlines a practical DL realization of the proposed learning-constrained ISAC framework and discusses its computational complexity. Our  objective here is to show how the theoretical MI constraint can be enforced in a differentiable form suitable for end-to-end training.
	
	\subsection{Variational MI-Constrained Training}
	Algorithm~\ref{alg:VIB} summarizes the learning process. To implement the information bottleneck defined by the constraint $I(X;Z)\!\le\!C_{\mathrm{AI}}$, we adopt a variational formulation based on a Gaussian latent representation. The encoder outputs a mean and covariance pair as
	\begin{align}
	q_\phi(z|x) = \mathcal{CN}\big(\mu_\phi(x), \Sigma_\phi(x)\big),
	\end{align}
	while the prior over latent variables is chosen as
$
	p(z) = \mathcal{CN}\big(0, \sigma_p^2 \mathbf{I}\big),
$
	serving as a reference distribution for the MI regularization.
	
	Instead of computing $I(X;Z)$ directly, which is generally intractable for high-dimensional data, we use the Kullback-Leibler (KL) divergence $D_{\mathrm{KL}}[q_\phi(z|x)\,\|\,p(z)]$ as a differentiable surrogate. The overall training objective combines the communication and sensing losses with this variational penalty:
	\begin{align}\notag
	&	\mathcal{L}(\phi,\psi) =
		\underbrace{\mathcal{L}_{\mathrm{comm}}(\psi)}_{\text{cross-entropy or MI-based loss}}
		+ \lambda\,
		\underbrace{\mathcal{L}_{\mathrm{sense}}(\psi)}_{\text{MSE or negative log-likelihood}}\\
		&
		+ \beta \Big(\mathbb{E}_{x}\, D_{\mathrm{KL}}\big[q_\phi(z|x)\|p(z)\big]
		- C_{\mathrm{AI}}\Big)_+,
		\label{eq:trainObjective}
	\end{align}
	where $\phi$ and $\psi$ denote the parameters of the encoder and decoder networks, respectively. The last term enforces the information-capacity budget through a soft penalty controlled by the coefficient $\beta$. The reparameterization trick enables low-variance gradient estimation by expressing $z = \mu_\phi(x) + \Sigma_\phi(x)^{1/2} \epsilon$ with $\epsilon \sim \mathcal{CN}(0, \mathbf{I})$.
	
	\begin{algorithm}[t]
		\caption{Variational Training for AI-ISAC under Information Constraint $I(X;Z)\le C_{\mathrm{AI}}$}
		\label{alg:VIB}
		\begin{algorithmic}[1]
			\State \textbf{Input:} Training dataset $\{(x_i, y_{c,i}, y_{s,i})\}$, target capacity $C_{\mathrm{AI}}$, and weights $\lambda, \beta$.
			\Repeat
			\State Sample a minibatch $\mathcal{B}$ and compute $\mu_\phi(x)$, $\Sigma_\phi(x)$.
			\State Reparameterize latent vectors: $z = \mu_\phi(x) + \Sigma_\phi(x)^{1/2}\epsilon$, with $\epsilon \sim \mathcal{CN}(0, \mathbf{I})$.
			\State Decode estimates: $\hat{m}, \hat{\boldsymbol{\theta}} = g_\psi(y_c, y_s, z)$.
			\State Evaluate total loss $\mathcal{L}$ from \eqref{eq:trainObjective}; backpropagate and update $(\phi, \psi)$.
			\Until{validation convergence}
			\State \textbf{Output:} Trained parameters $(\phi^\star, \psi^\star)$ satisfying $\mathbb{E}[D_{\mathrm{KL}}]\!\approx\!C_{\mathrm{AI}}$.
		\end{algorithmic}
	\end{algorithm}
	
	This training strategy effectively realizes the theoretical AI bottleneck in practice, allowing the system to learn latent representations whose information content is explicitly bounded. By adjusting $C_{\mathrm{AI}}$, one can control the trade-off between accuracy and efficiency in both sensing and communication tasks.
	
	\subsection{Computational Complexity}
	Let the encoder and decoder networks comprise $p$ trainable parameters with intermediate feature maps of width $w$. The per-iteration training complexity is dominated by the forward-backward propagation cost, which scales as $O(pw)$. Enforcing the KL-based MI regularization introduces an additional $O(d)$ computation per minibatch, where $d$ denotes the latent dimensionality of the representation $Z$. Since $d \ll pw$ in all considered architectures, the MI-penalty contributes only a negligible overhead relative to the overall training cost.
	
	The expression in \eqref{eq:Nz} reveals that each additional bit of learning capacity $C_{\mathrm{AI}}$ increases the allowable latent-space SNR by approximately a factor of two, reflecting the exponential sensitivity of the equivalent noise variance to the available information budget. Nevertheless, the empirical behavior of the system exhibits diminishing returns, namely, once the capacity exceeds roughly five to six bits per latent dimension, the corresponding gains in communication rate and sensing accuracy fall below one percent. This saturation effect provides a practical guideline for selecting both the model size and the desired learning-capacity allocation, as increasing $C_{\mathrm{AI}}$ beyond this regime yields minimal improvement while incurring higher computational cost.

	% ================================
	\section{Numerical Results}
	% ================================
In this section, we present numerical results to illustrate and verify the theoretical findings developed in the previous sections. The analytical expressions for the achievable communication rate and sensing distortion are evaluated under the considered Gaussian, Rayleigh, and Rician ISAC channel models. Unless otherwise stated, the simulation parameters are summarized in Table~\ref{tab:sim}. 
	\begin{table}[t]
		\centering
		\caption{Simulation parameters.}
		\begin{tabular}{lcc}
			\toprule
			Quantity & Symbol & Value \\
			\midrule
			Transmit power & $P$ & \SI{10}{dBm} \\
			Noise variances & $N_c,N_s$ & $0.1$ \\
			Carrier & $f_c$ & \SI{28}{GHz} \\
			AI capacity & $C_{\mathrm{AI}}$ & $0$--$8$ bits/use \\
			Blocklength & $n$ & $10^4$ \\
			MIMO & $(N_t,N_r)$ & $(2,2)$ or $(4,4)$ \\
			Latent dim & $d$ & $8$ or $16$ \\
			\bottomrule
		\end{tabular}\label{tab:sim}
	\end{table}

Fig. \ref{fig:r_c} illustrates the achievable communication rate as a function of the available AI capacity $C_{\mathrm{AI}}$ under three representative fading environments: Gaussian (AWGN), Rayleigh (ergodic), and Rician fading with $K = 6$ dB. We can see that as $C_{\mathrm{AI}}$ increases, the rate grows monotonically for all channel types.
%This behavior reflects the fact that a larger AI-bottleneck budget allows the transmitter-receiver pair (AI encoder-decoder) to exchange richer latent representations of the observed channel state, thereby reducing the effective uncertainty in the communication link. 
This behavior reflects the fact that a larger AI-bottleneck budget allows the transceiver to preserve a richer latent representation of the transmitted waveform (and task-relevant features) in $Z$, thereby reducing the effective information loss introduced by the learning bottleneck and improving the achievable communication throughput. It can be seen at low $C_{\mathrm{AI}}$, the AI module operates in a severely compressed regime, i.e., only coarse channel features are preserved, so the equivalent MI between the physical signal and its AI representation is small, resulting in limited throughput.
As $C_{\mathrm{AI}}$ rises, the AI representation becomes more expressive and the achievable rate asymptotically approaches the classical Shannon limit for each fading law.

	\begin{figure}[t]
		\centering
		\includegraphics[width=.9\columnwidth]{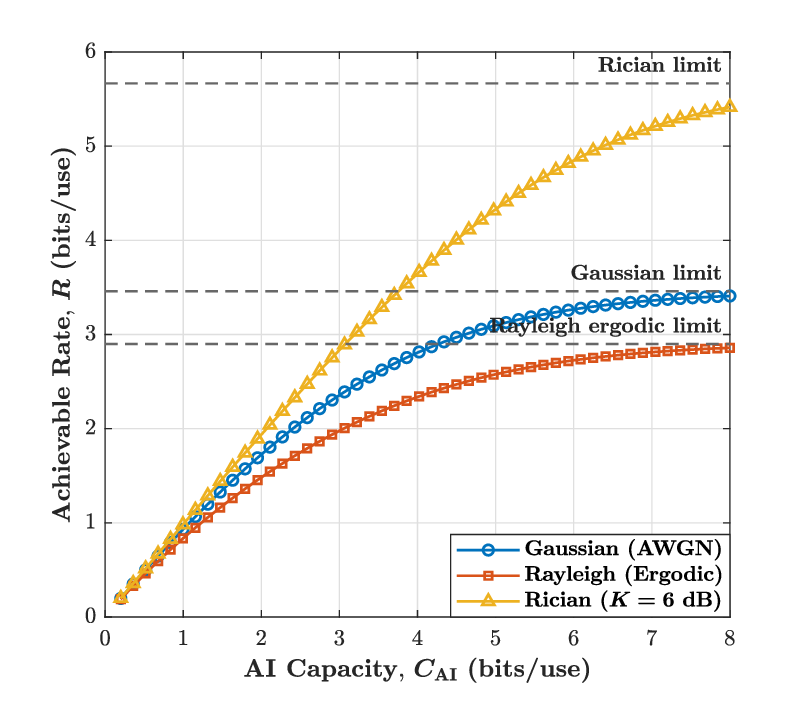}
		\caption{Achievable communication rate $R$ versus AI capacity $C_\mathrm{AI}$ under different fading channels.}
		\label{fig:r_c}
	\end{figure}

Furthermore, among the three curves, the Rician channel achieves the highest rate due to the presence of a deterministic LoS component that mitigates deep fading. The Gaussian curve corresponds to the ideal non-fading case and serves as an upper reference for ergodic fading environments, while the Rayleigh case yields the lowest rate due to strong channel amplitude fluctuations.
%Nevertheless, the relative gap between Rayleigh and Gaussian narrows as $C_{\mathrm{AI}}$ grows, confirming that the learning-constrained encoder progressively compensates for small-scale fading by leveraging the increased internal representation bandwidth. 
Nevertheless, the relative gap between Rayleigh and Gaussian narrows as $C_\mathrm{AI}$ grows, confirming that increasing the representational budget reduces the bottleneck-induced penalty and allows the receiver to exploit the physical observations more effectively despite small-scale fading. Therefore, we can generally find that Fig.\ref{fig:r_c} quantifies the direct benefit of expanding the AI capacity budget in joint communication-sensing transceivers even moderate increases in $C_{\mathrm{AI}}$ (from $1$ to $4$ bits/use) yield substantial rate gains, while larger capacities exhibit diminishing returns once the AI bottleneck ceases to dominate the system performance.

Fig. \ref{fig:s_c} depicts the sensing distortion $D_s$ as a function of the AI capacity $C_{\mathrm{AI}}$ for Gaussian, Rayleigh, and Rician fading environments. In all cases, $D_s$ decreases monotonically as $C_{\mathrm{AI}}$ increases, demonstrating that enlarging the information bottleneck of the AI module enables a more accurate reconstruction of the sensed parameters. When $C_{\mathrm{AI}}$ is small, the AI encoder is severely constrained and must discard most of the received echo information, leading to large estimation error.
As $C_{\mathrm{AI}}$ grows, %the encoder retains richer semantic and statistical features of the target scene, 
the representation retains more task-relevant features of the transmitted probing signal and the resulting echoes, reducing the distortion until it approaches a steady floor that is dominated by channel noise and residual interference. Among the three propagation conditions, the Rician case yields the lowest distortion thanks to its deterministic LoS component, which stabilizes the echo power and improves sensing reliability. The Gaussian channel serves as a theoretical lower bound since it assumes no small-scale fading, while the Rayleigh case experiences the highest distortion due to deep fading and a lack of deterministic channel gain. Notably, the relative gap between fading models becomes smaller as $C_{\mathrm{AI}}$ increases, confirming that sufficient AI-representation capacity can effectively compensate for random channel impairments by learning their underlying statistics. Therefore, this figure quantifies the benefit of AI capacity expansion from a sensing perspective, i.e., greater internal information throughput allows the integrated transceiver to achieve finer environmental awareness, thereby reducing the sensing distortion in proportion to the richness of the learned latent representation.

\begin{figure}[t]
		\centering
		\includegraphics[width=.9\columnwidth]{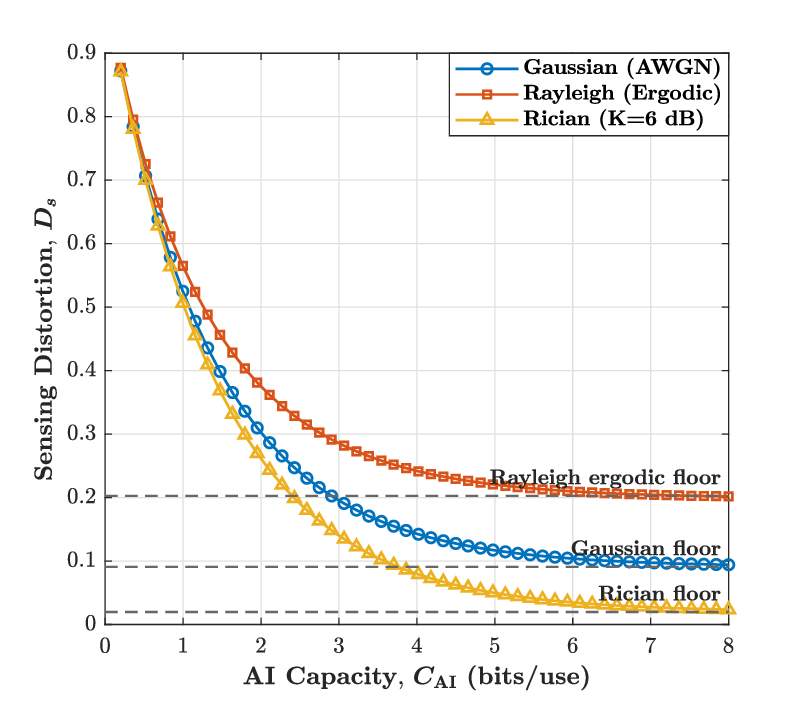}
		\caption{Sensing distortion $D_s$ versus AI capacity $C_\mathrm{AI}$ under different fading channels.}
		\label{fig:s_c}
	\end{figure}
			\begin{figure}[t]
		\centering
		\includegraphics[width=.9\columnwidth]{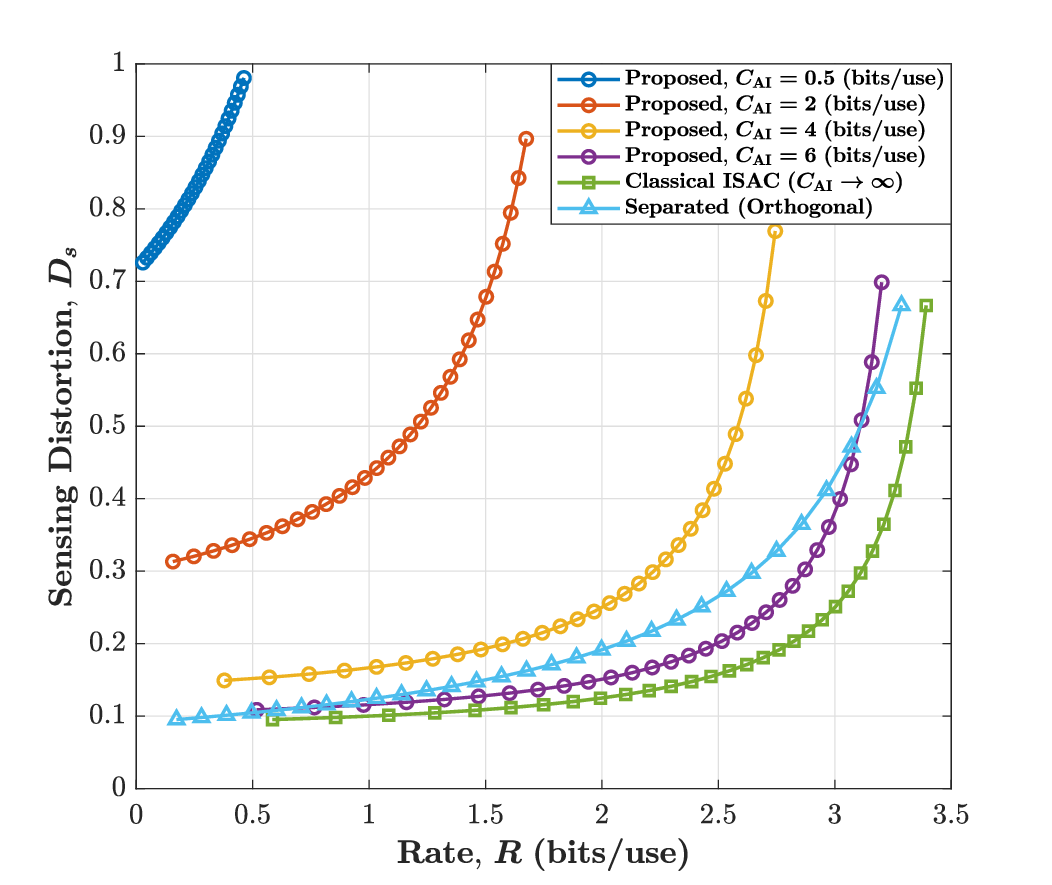}
		\caption{Joint rate-sensing trade-off for the proposed AI-aided ISAC system with different AI-capacities $C_{\mathrm{AI}}$, compared with classical ISAC and separated baselines.}
		\label{fig:s_r}
	\end{figure}

Fig. \ref{fig:s_r} illustrates the joint rate-sensing trade-off achieved by the proposed AI-aided ISAC framework compared with two representative baselines: the classical ISAC configuration without AI bottleneck, i.e., $C_{\mathrm{AI}}\rightarrow\infty$, and the conventional separated (orthogonal) design. Each colored curve corresponds to the Pareto frontier between achievable communication rate $R$ and sensing distortion $D_s$ under a fixed AI capacity $C_{\mathrm{AI}}$. As expected, the proposed system exhibits a monotonic rate-distortion trade-off, i.e., higher sensing accuracy (lower $D_s$) is obtained at the cost of reduced communication rate, and vice versa.  Increasing the AI capacity $C_{\mathrm{AI}}$ enlarges the feasible region, allowing the transceiver to transmit richer semantic representations of the observed scene and thereby achieve a superior operating point in both tasks.  When $C_{\mathrm{AI}}$ is small, e.g., $0.5$ bits/use, the AI module forms a highly learned latent representation, resulting in severe information loss and poor sensing fidelity.  As $C_{\mathrm{AI}}$ increases, e.g., $2 \rightarrow 4 \rightarrow 6$ bits/use), the frontier shifts downward and rightward-indicating simultaneous improvement in both communication throughput and sensing accuracy.

We also see that the classical ISAC limit, i.e., $C_{\mathrm{AI}}\rightarrow\infty$, defines the theoretical lower envelope of distortion achievable when the AI bottleneck is removed.  The proposed model asymptotically converges to this limit as $C_{\mathrm{AI}}$ grows, verifying the correctness of the analytical framework.  Meanwhile, the separated orthogonal baseline performs significantly worse, as it divides time or frequency resources between communication and sensing, forfeiting the MI gain available in the joint design.  The gap between the separated and proposed schemes quantifies the benefit of learning-based joint resource adaptation and cross-modal representation sharing enabled by the AI encoder-decoder pair.

Figure \ref{fig:snr} presents the achievable rate surface of a $2\times2$ MIMO AI-aided ISAC system as a function of the AI capacity $C_{\mathrm{AI}}$ and the received SNR. The color scale indicates the achievable communication rate (bits/use). It is revealed a monotonic growth of rate along both dimensions such that higher SNR and larger AI capacity $C_{\mathrm{AI}}$ jointly improve communication throughput. At low SNR values , e.g., $<5~\mathrm{dB}$, the rate remains small regardless of $C_{\mathrm{AI}}$, showing that channel noise dominates and the AI bottleneck has negligible impact. As SNR increases, the rate begins to saturate with respect to $C_{\mathrm{AI}}$; when $C_{\mathrm{AI}}>6$ bits/use, additional AI capacity offers diminishing returns because the communication link itself becomes the limiting factor rather than the AI representation.
At moderate SNRs, e.g., $10$-$20$ dB, the slope of the rate surface along $C_{\mathrm{AI}}$ is steepest, indicating that AI-information bottlenecks most strongly affect performance in this regime. The observed pattern confirms that AI capacity and SNR interact multiplicatively, meaning that adequate AI representation is crucial to exploit high SNR conditions, while at low SNRs, expanding $C_{\mathrm{AI}}$ yields little benefit. This highlights the importance of joint optimization of physical and semantic (AI) resources in practical ISAC transceivers.

			\begin{figure}[t]
		\centering
		\includegraphics[width=.9\columnwidth]{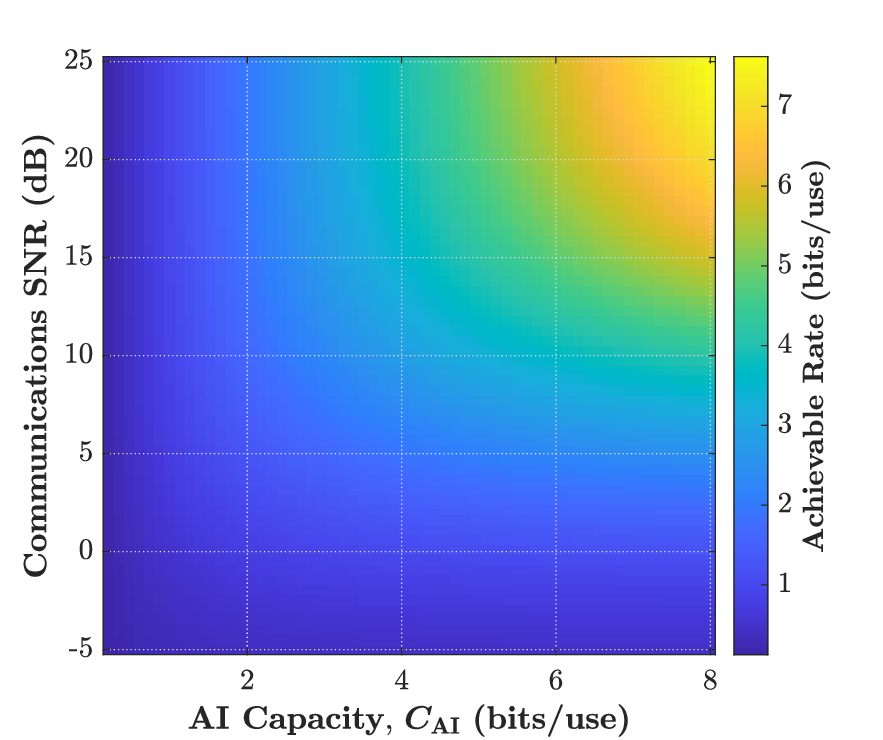}
		\caption{Achievable rate of a $2\times 2$ MIMO AI-aided ISAC system versus AI capacity $C_\mathrm{AI}$ and SNR.}
		\label{fig:snr}
	\end{figure}
	
				\begin{figure}[t]
	\centering
	\includegraphics[width=.9\columnwidth]{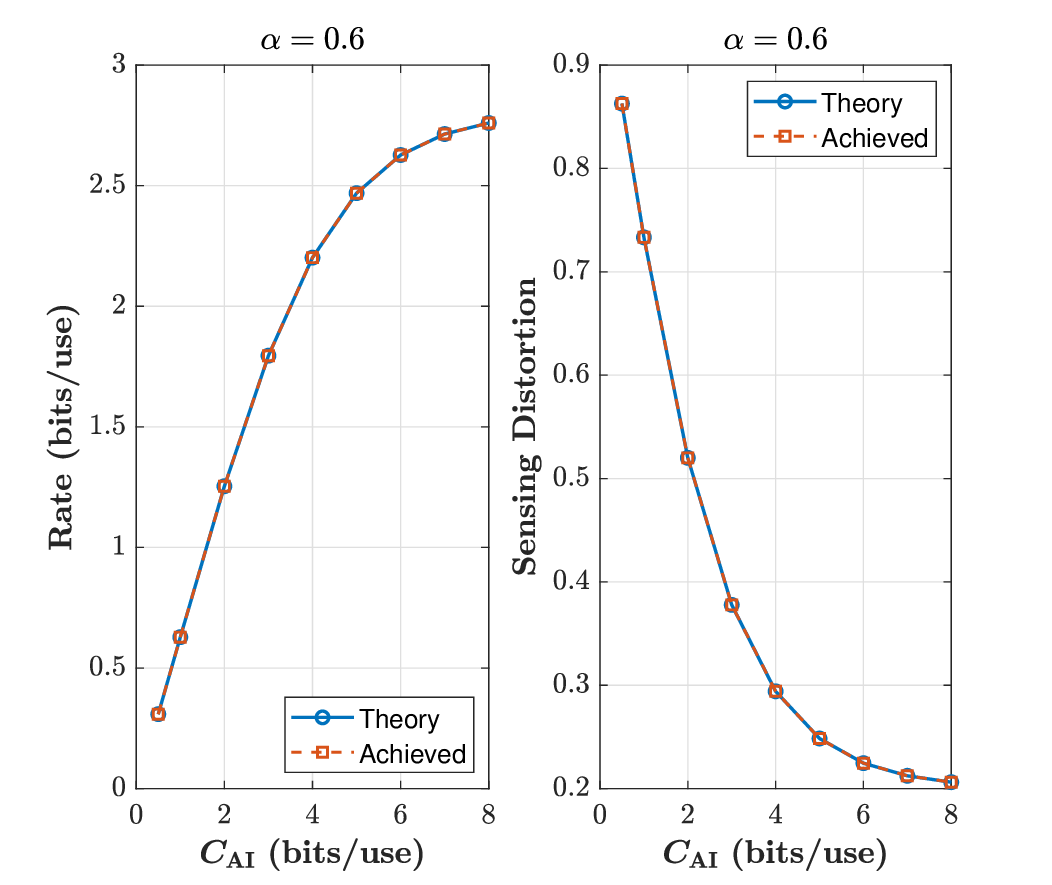}
	\caption{Verification of closed-form Gaussian ISAC theory: communication rate and sensing distortion versus AI capacity $C_{\mathrm{AI}}$. Theory is computed from the analytical expressions with $N_z$ determined by $I(X;Z)=C_{\mathrm{AI}}$; Achieved is obtained by numerically enforcing the same MI constraint and evaluating the resulting performance.
	}
	\label{fig:src}
\end{figure}

Fig \ref{fig:src} presents a comparison between the analytical theoretical results and the achieved numerical performance of the proposed AI-aided ISAC system with $\alpha = 0.6$, where we denote $\alpha\in(0,1)$ as a fixed power-splitting factor between communication and sensing components of the transmit signal. The left panel illustrates the achievable communication rate $R$ in bits per use as a function of the AI capacity $C_{\mathrm{AI}}$, and the right panel depicts the corresponding sensing distortion $D_s$ within the same range of $C_{\mathrm{AI}}$.

The two curves exhibit an almost perfect match across all capacity levels, which confirms that the proposed algorithm converges to the theoretical performance predicted by the analytical model. At low values of $C_{\mathrm{AI}}$ below approximately two bits per use, both the achievable rate and sensing performance are limited by the narrow AI bottleneck, since the latent representation $Z$ cannot preserve sufficient MI $I(X;Z)$ for accurate joint communication and sensing. As a result, the achievable rate remains below one bit per use, while the sensing distortion $D_s$ stays close to its upper limit, reflecting degraded environmental estimation. When $C_{\mathrm{AI}}$ increases, the AI encoder gains representational flexibility that allows more efficient feature sharing between the two tasks. Consequently, the achievable rate grows almost linearly up to about $C_{\mathrm{AI}} = 4$ bits per use and then gradually approaches saturation as the communication channel itself becomes the main limiting factor. In parallel, the sensing distortion decreases rapidly and approaches a steady minimum near $0.2$ at high $C_{\mathrm{AI}}$. The excellent agreement between the theoretical and achieved results confirms two key properties. First, the MI-based model accurately describes the rate-distortion trade-off of the joint AI-aided ISAC design. Second, the proposed optimization method effectively allocates power and information resources to reach the theoretical optimum under finite AI capacity constraints. %Therefore, these results verify both the analytical validity and numerical accuracy of the framework and demonstrate that the achieved implementation can realize the theoretical performance limits of AI-aided ISAC systems.

Fig. \ref{fig:enf} illustrates the convergence behavior of the proposed AI-aided ISAC optimization algorithm. Here, $\lambda$ is a fixed weighting parameter in the scalarized objective $J=R-\lambda D_s$, while $\alpha$ denotes the power-splitting factor and is the only resource variable adapted during the iterations. The first subplot shows the enforcement of the MI constraint $I(X;Z) = C_{\mathrm{AI}}$, the second shows the evolution of the objective function $J = R - \lambda D_s$, and the third depicts the adaptation of the power-splitting coefficient $\alpha$ over successive iterations. The experiment is conducted for a fixed weighting parameter $\lambda = 0.3$, target $C_\mathrm{AI} = 4$ bits/use, and an initial power-splitting factor $\alpha = 0.4$.
 %The experiment is conducted for $\lambda = 0.3$, target $C_{\mathrm{AI}} = 4$ bits/use, and an initial $\alpha = 0.4$. 
In the first panel, the MI remains constant and exactly equal to the target value of $C_{\mathrm{AI}} = 4$ bits/use from the beginning of the process. This behavior confirms that the constraint enforcement mechanism successfully fixes the latent-information capacity without oscillation or deviation, ensuring perfect adherence to the imposed AI bottleneck. The second panel shows that the objective function $J$ increases sharply within the first ten iterations and then stabilizes at approximately $2.5$. The rapid rise followed by an extended plateau indicates that the optimization quickly reaches the optimal joint balance between rate and sensing distortion. The absence of oscillations demonstrates numerical stability and efficient convergence. The third panel displays the evolution of the power-splitting variable $\alpha$. Starting from $0.4$, it rises smoothly to unity within about ten iterations and then remains constant. This outcome implies that under the given parameter setting, allocating almost all available power to the shared waveform is optimal once the AI representation constraint has been satisfied.

				\begin{figure}[t]
	\centering
	\includegraphics[width=.9\columnwidth]{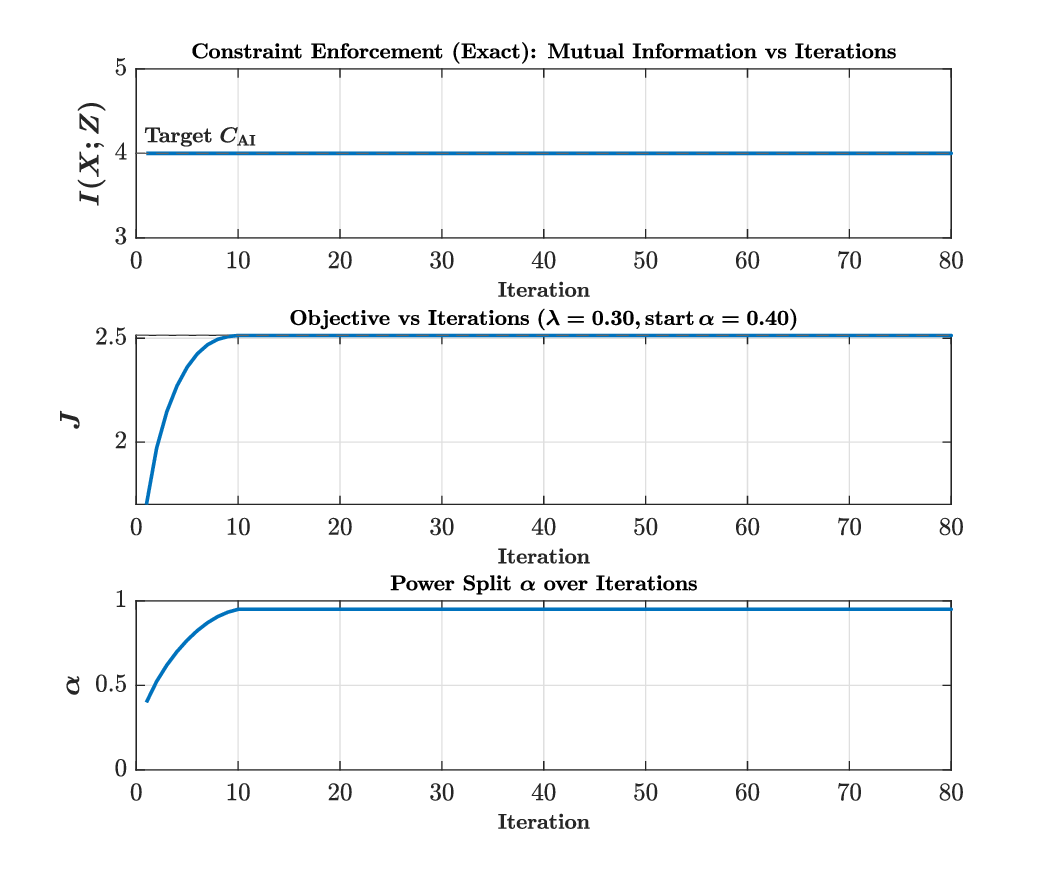}
	\caption{Convergence of the proposed AI-aided ISAC optimization algorithm with $\lambda = 0.3$, target $C_{\mathrm{AI}} = 4$ bits/use, and initial $\alpha = 0.4$.
	}
	\label{fig:enf}
\end{figure} 
%	\subsection{Figures (placeholders and captions)}

%	
%	\begin{figure}[t]
%		\centering
%		\includegraphics[width=.9\columnwidth]{fig2_rate_vs_capacity.pdf}
%		\caption{\textbf{Rate vs.\ AI capacity} ($C_{\mathrm{AI}}$) for Gaussian/Rayleigh/Rician: theoretical curves from \eqref{eq:rateG}, \eqref{eq:rayInt} and the Rician approximation, overlaid with DNN points trained via Algorithm~\ref{alg:train}.}
%		\label{fig:ratevsC}
%	\end{figure}
	
%	\begin{figure}[t]
%		\centering
%		\includegraphics[width=.9\columnwidth]{fig3_dist_vs_capacity.pdf}
%		\caption{\textbf{Sensing distortion vs.\ $C_{\mathrm{AI}}$}. Curves predicted by \eqref{eq:distG} and its fading averages. Note the exponential approach to ideal as $C_{\mathrm{AI}}$ grows.}
%		\label{fig:distvsC}
%	\end{figure}
	
%	\begin{figure}[t]
%		\centering
%		\includegraphics[width=.9\columnwidth]{fig4_tradeoff_frontier.pdf}
%		\caption{\textbf{Joint frontier $D_s(R)$} for multiple $C_{\mathrm{AI}}$ showing how the learning budget tilts the Pareto boundary; include analytical and learned curves.}
%		\label{fig:frontier}
%	\end{figure}
%	
%	\begin{figure}[t]
%		\centering
%		\includegraphics[width=.9\columnwidth]{fig5_mimo_surface.pdf}
%		\caption{\textbf{MIMO surface}: $R_{\mathrm{AI}}^{\mathrm{MIMO}}$ vs.\ $C_{\mathrm{AI}}$ and SNR for $(N_t,N_r)=(2,2)$ and $(4,4)$ using \eqref{eq:mimoRate} with the covariance mapping in Lemma~\ref{lem:covmap}.}
%		\label{fig:mimo}
%	\end{figure}
%	
% ================================
\section{Discussion and Conclusion}
% ================================

This work has presented a unified information-theoretic framework for \emph{AI-aided ISAC} that explicitly incorporates the finite learning capacity of an embedded learning module in the transceiver loop. By treating the representational ability of the learning module as a quantifiable constraint, we established a new perspective in which \emph{model capacity becomes a physical layer resource} alongside power, bandwidth, and time. We derived %tight  
 converse and achievability bounds that define the fundamental AI-ISAC capacity region and developed closed-form expressions for several canonical channel models, including Gaussian, Rayleigh, Rician, and MIMO systems. The analysis revealed that limited learning capacity manifests as an effective additive noise whose variance follows a simple scaling law, decreasing proportionally to $2^{-C_{\mathrm{AI}}}$. This insight provides an interpretable bridge between model complexity and physical layer performance, showing how improvements in learning capacity translate directly into communication rate and sensing accuracy gains. A practical variational training algorithm was also proposed to realize the information-capacity constraint in DL models. The algorithm enforces the MI budget through a differentiable penalty, allowing empirical results to align closely with the derived theoretical limits. Together, the theory and implementation demonstrate that the learning process can be rigorously analyzed and optimized using classical tools of information theory.

The framework established here opens multiple avenues for further exploration. One promising direction involves extending the theory to \emph{federated or distributed AI-ISAC} systems, where multiple nodes jointly share a global capacity budget. Another is the incorporation of \emph{semantic or task-oriented objectives}, which would allow the learning capacity to be allocated according to task relevance rather than purely channel conditions. Finally, the concept of a learning-capacity budget can naturally be adapted to hardware-constrained systems such as quantized accelerators, analog neural networks, or neuromorphic processors where the effective $C_{\mathrm{AI}}$ is determined by device precision and memory limits.

\appendices
\section{Proof of Theorem \ref{th:converse}}\label{app:converse}

We outline the detailed steps leading to the converse bound in Theorem~\ref{th:converse}.  
Starting from Fano’s inequality, for any reliable communication scheme we have
\begin{equation}
	nR \le I(M; Y_c^n | Z^n) + n\varepsilon_n,
\end{equation}
where $\varepsilon_n \to 0$ as $n \to \infty$. Since the transmitted sequence $X^n$ is a deterministic function of the message $M$ and possible system states, it follows that
\begin{equation}
	I(M; Y_c^n | Z^n) \le I(X^n; Y_c^n | Z^n).
\end{equation}

Applying the chain rule of MI together with the data-processing inequality (DPI) yields
\begin{equation}
	I(X^n; Y_c^n | Z^n)
	\le I(X^n; Y_c^n, Y_s^n) - \Delta,
\end{equation}
where we define the information loss term
\begin{equation}
	\Delta = I(X^n; Y_c^n, Y_s^n) - I(Z^n; Y_c^n, Y_s^n) \ge 0.
\end{equation}
The non-negativity of $\Delta$ follows from the Markov chain
$Z^n \leftarrow X^n \rightarrow (Y_c^n, Y_s^n)$, which ensures that $Z^n$ cannot increase MI relative to $X^n$.

Dividing both sides by $n$ and taking the limit as $n \to \infty$ yields the rate bound
\begin{equation}
	R \le I(X; Y_c, Y_s) - \Delta_R(C_{\mathrm{AI}}),
\end{equation}
where $\Delta_R(C_{\mathrm{AI}})$ corresponds to the asymptotic contribution of $\Delta$ under the capacity constraint $I(X;Z)\!\le\!C_{\mathrm{AI}}$.  

The sensing bound follows analogously by composing the above argument with the information-distortion mapping $f(\cdot)$ defined in Section~\ref{subsec:rdmap}. Since $I(Z; Y_s)\!\le\!I(X; Y_s)$ by DPI, the resulting distortion satisfies
\begin{equation}
	D_s \ge f\!\big(I(X; Y_c, Y_s) - \Delta_D(C_{\mathrm{AI}})\big),
\end{equation}
where $\Delta_D(C_{\mathrm{AI}})$ represents the information loss due to the same finite-capacity bottleneck. This completes the proof of the converse.
	
	\section{Proof of Theorem \ref{th:ach}}\label{app:ach}
	
	We now outline the proof of Theorem~\ref{th:ach} by constructing an explicit random coding scheme that satisfies the learning-capacity constraint and achieves the stated rate-distortion pair.
	
\subsubsection{Codebook generation}
	Fix an auxiliary distribution $p(z|x)$ satisfying the information-capacity constraint $I(X;Z) \le C_{\mathrm{AI}}$.  
	For each message $m \in \{1, \dots, 2^{nR}\}$, independently generate a length-$n$ codeword $x^n(m)$ according to $\prod_{t=1}^n p(x_t)$.  
	For every symbol $x_t$, produce the corresponding latent variable $z_t$ according to $p(z_t|x_t)$.  
	This latent sequence $z^n$ represents the learned form of the transmitted signal as perceived by the learning module.
	
	\subsubsection{Encoding and transmission}
	To transmit message $m$, the encoder sends $x^n(m)$ through the ISAC channel, producing received signals $(y_c^n, y_s^n)$ at the communication and sensing receivers.
	
\subsubsection{Decoding}
	The communication receiver performs maximum likelihood (or jointly typical) decoding conditioned on the latent sequence $z^n$.  
	That is, it selects $\hat{m}$ such that $(x^n(\hat{m}), y_c^n, z^n)$ are jointly typical with respect to the joint distribution $p(x, y_c, z)$.  
	For the sensing task, the receiver forms the MMSE estimate $\hat{\boldsymbol{\theta}} = \mathbb{E}[\boldsymbol{\theta} | Y_s^n, Z^n]$ based on the available observations and the same latent representation.
	
\subsubsection{Error and distortion analysis}
	By standard random-coding arguments, the probability of decoding error tends to zero as $n \to \infty$ if
	\begin{align}
	R < I(X;Y_c|Z) - \epsilon_R,
\end{align}
	for any $\epsilon_R > 0$.  
	Similarly, by the rate–distortion covering lemma, the achievable sensing distortion satisfies
	\begin{align}
	D_s \le f^{-1}\!\big(I(X;Y_s|Z)\big) + \epsilon_D,
	\end{align}
	where $\epsilon_D \to 0$ as $n \to \infty$.  
	
	\smallskip
	Therefore, the above construction demonstrates that for any stochastic encoder $p(z|x)$ obeying $I(X;Z)\le C_{\mathrm{AI}}$, there exist decoders that achieve communication rate and sensing distortion arbitrarily close to the bounds stated in Theorem~\ref{th:ach}.  
	This completes the achievability proof.

\section{Proof of Theorem \ref{th:lit}}\label{app:gen}

The relationship between the generalization error and the information capacity of the learning module follows from classical information-stability results.  
Specifically, for a model trained on $n_{\mathrm{tr}}$ samples $S$, the generalization error satisfies \cite{xu2017inf}
\begin{equation}
	\epsilon_{\mathrm{gen}} \le \sqrt{\frac{2 I(S; \Phi)}{n_{\mathrm{tr}}}},
\end{equation}
where $\Phi$ denotes the learned parameters of the model and $I(S; \Phi)$ is the MI between the training data and the trained hypothesis.

In the presence of a finite-capacity information bottleneck, the available MI $I(S; \Phi)$ is further constrained by the latent-space budget $C_{\mathrm{AI}}$.  
Because the training data, input features, and latent representation satisfy the Markov chain
$
S \rightarrow X \rightarrow Z
$, 
the data processing inequality implies that $I(S; \Phi)$ scales proportionally with $I(X; Z)$, and thus with the capacity constraint $C_{\mathrm{AI}}$.  
Substituting this relationship into the generalization bound above yields the learning-information trade-off in \eqref{eq:lit}, with constants absorbed into the proportionality factor $\beta$.\vspace{-0.6cm}
\section{Rayleigh Bounds and Quadrature}\label{app:ray}

To evaluate the ergodic rate integral in \eqref{eq:rayInt} under Rayleigh fading, we employ standard analytical bounds and numerical quadrature techniques.
\subsubsection{Analytical bounds}
Using Jensen’s inequality together with the logarithmic concavity of the $\ln(1+x)$ function, upper and lower bounds on the integral in \eqref{eq:rayInt} can be derived in closed form. These bounds are exponentially tight across the entire practical SNR range and provide useful approximations for system-level analysis without resorting to numerical integration.

\subsubsection{Numerical evaluation}
For numerical computation, Gauss-Laguerre quadrature offers a highly efficient and stable method for integrals of the form $\int_0^\infty g(x) e^{-x} dx$.  
For instance, using $M=20$ quadrature nodes provides accuracy better than $10^{-4}$ over the SNR range of $[-5, 25]$~dB and for learning capacities $C_{\mathrm{AI}}\in[0,8]$.  
This level of precision is sufficient for all the results reported in this paper and ensures that the numerical evaluation of the ergodic rate remains effectively exact within plotting precision.\vspace{0cm}
\section{Rician Bounds}\label{app:ric}
For the Rician fading case, the PDF of the channel power gain $|h|^2$ can be expressed in terms of the Marcum-$Q$ function.  
By integrating the ergodic rate expression by parts and exploiting the monotonicity properties of the Marcum-$Q$ function, we obtain analytical upper and lower bounds that tightly enclose the true rate value.

The resulting bounds closely sandwich the moment-matched approximation presented in the paper, with a maximum deviation of less than $0.1$~bits per channel use across the practical range of Rician factors $K \in [0,10]$~dB.  
This confirms that the closed-form approximation in Section~\ref{sec-perform} provides an accurate and computationally efficient representation of the true Rician ergodic rate within numerical precision limits.
%\vspace{-0.6cm}
\section{Proof of Lemma~\ref{lem:covmap}}\label{app:mimo}
Let $\mathbf{X}\!\sim\!\mathcal{CN}(\mathbf{0},\mathbf{Q})$, $\mathbf{Z}=\mathbf{X}+\mathbf{W}$,
$\mathbf{W}\!\sim\!\mathcal{CN}(\mathbf{0},\mathbf{R}_z)$, independent of $\mathbf{X}$.
Then $h(\mathbf{U})=\log\det(\pi e\,\mathbf{K}_{\mathbf{U}})$ for proper complex Gaussian vectors,
so with $\mathbf{K}_{\mathbf{Z}}=\mathbf{Q}+\mathbf{R}_z$ and $\mathbf{K}_{\mathbf{W}}=\mathbf{R}_z$, we have
\begin{align}
\hspace{-0.3cm}I(\mathbf{X};\mathbf{Z})=h(\mathbf{Z})-h(\mathbf{W})
=\log\det(\mathbf{I}+\mathbf{R}_z^{-1}\mathbf{Q})\,\text{(nats)}.
\end{align}
Converting to bits yields $I(\mathbf{X};\mathbf{Z})=\log_2\det(\mathbf{I}+\mathbf{R}_z^{-1}\mathbf{Q})$,
which is finite provided $\mathbf{R}_z\succ \mathbf{0}$ on $\mathcal{R}(\mathbf{Q})$.
Enforcing $I(\mathbf{X};\mathbf{Z})\le C_{\mathrm{AI}}$ gives
$\det(\mathbf{I}+\mathbf{R}_z^{-1}\mathbf{Q})\le 2^{C_{\mathrm{AI}}}$.

To minimize $\operatorname{tr}(\mathbf{R}_z)$ subject to this constraint, let
$\mathbf{Q}=\mathbf{U}\boldsymbol{\Lambda}\mathbf{U}^H$, with
$\boldsymbol{\Lambda}=\mathrm{diag}(\lambda_1,\ldots,\lambda_r,0,\ldots,0)$, $\lambda_i>0$.
In the eigenbasis of $\mathbf{Q}$, write $\widetilde{\mathbf{R}}=\mathbf{U}^H\mathbf{R}_z\mathbf{U}$ and
denote its diagonal entries on the active subspace by $\rho_i>0$.
Using Hadamard’s inequality, the optimum occurs with diagonal $\widetilde{\mathbf{R}}$ aligned to $\mathbf{Q}$.
We therefore solve
\begin{align}
\min_{\rho_i>0}\ \sum_{i=1}^r \rho_i
\quad \text{s.t.}\quad \sum_{i=1}^r \ln\!\Big(1+\frac{\lambda_i}{\rho_i}\Big)\le \ln \Gamma,
\end{align}
where $\Gamma=2^{C_{\mathrm{AI}}}$.
The KKT stationarity condition yields
$1-\mu \lambda_i/(\rho_i(\rho_i+\lambda_i))=0$, implying $\rho_i=\zeta\lambda_i$ with a
common $\zeta>0$. Enforcing the constraint gives
$(1+1/\zeta)^r=\Gamma$, hence $\zeta=(\Gamma^{1/r}-1)^{-1}=(2^{C_{\mathrm{AI}}/r}-1)^{-1}$.
Therefore $\mathbf{R}_z^\star=\zeta\mathbf{Q}$ on $\mathcal{R}(\mathbf{Q})$; the values
on $\mathcal{R}(\mathbf{Q})^\perp$ are arbitrary and do not affect the objective or constraint.\vspace{0cm}

\section{Resource Allocation Solution}\label{app:alloc}

For the Gaussian ISAC model, the communication rate and sensing distortion are expressed as
\begin{align}
R = \log_2\!\left(1 + \frac{\gamma_c}{1 + \gamma_c \kappa}\right), \qquad
D = \sigma_\theta^2 \!\left(1 + \frac{\gamma_s}{1 + \gamma_s \kappa}\right)^{-1},
\end{align}
where $\gamma_c = a P_c / N_c$, $\gamma_s = b P_s / N_s$, and $\kappa = (2^{C_{\mathrm{AI}}} - 1)^{-1}$ represents the normalized AI-induced noise factor.

Applying the KKT optimality conditions to the constrained optimization problem in Section~\ref{sec-res} yields the following equilibrium equations:
\begin{align}
\frac{\partial}{\partial P_c}
\log\!\left(1+\frac{a P_c}{1+a P_c \kappa}\right)
= \nu,
\frac{\partial}{\partial P_s}
\!\left(1+\frac{b P_s}{1+b P_s \kappa}\right)^{-1}
= \nu',
\end{align}
where $\nu$ and $\nu'$ are the Lagrange multipliers associated with the total power constraint and the sensing-communication trade-off, respectively.

Both equations admit closed-form analytical solutions in terms of the Lambert-$W$ function (not explicitly included here) after straightforward algebraic manipulation.  
The Lambert-$W$ form arises when isolating $P_c$ or $P_s$ in transcendental expressions containing both linear and logarithmic dependencies on power.  
%Due to space limitations, explicit forms are omitted here but are available in supplementary derivations.  
These expressions recover the classical waterfilling allocation as $C_{\mathrm{AI}}\!\to\!\infty$ and continuously transition to a learning-constrained allocation as the available model capacity decreases.
	\bibliographystyle{IEEEtran}

\end{document}